\newcommand{\argmin}{\mathop{\rm argmin}}
\newcommand{\obj}{f_{(G,v)}}
\DeclareMathOperator{\proj}{proj}
\theoremstyle{plain}
\newtheorem{theorem}     {Theorem}
\newtheorem{lemma}       {Lemma}
\newtheorem{corollary}   {Corollary}
\newtheorem{problem}     {Problem}
\newtheorem{fact}        {Fact}
\newtheorem{assumption}  {Assumption}
\newcommand{\squishlist}{
 \begin{list}{$\bullet$}
  {  \setlength{\itemsep}{0pt}
     \setlength{\parsep}{3pt}
     \setlength{\topsep}{3pt}
     \setlength{\partopsep}{0pt}
     \setlength{\leftmargin}{2em}
     \setlength{\labelwidth}{1.5em}
     \setlength{\labelsep}{0.5em}
} }
\newcommand{\squishend}{
  \end{list}
}
\begin{document}

%%
%% The "title" command has an optional parameter,
%% allowing the author to define a "short title" to be used in page headers.
%\title{The Name of the Title Is Hope}
\title{Local Centrality Minimization with Quality Guarantees}

%%
%% The "author" command and its associated commands are used to define
%% the authors and their affiliations.
%% Of note is the shared affiliation of the first two authors, and the
%% "authornote" and "authornotemark" commands
%% used to denote shared contribution to the research.
\author{Atsushi Miyauchi}
\email{atsushi.miyauchi@centai.eu}
\affiliation{%
  \institution{CENTAI Institute}
  \city{Turin}
  \country{Italy}
}
\author{Lorenzo Severini}
\email{lorenzo.severini@unicredit.eu}
\affiliation{%
  \institution{UniCredit}
  \city{Rome}
  \country{Italy}
}

\author{Francesco Bonchi}
\email{francesco.bonchi@centai.eu}
\affiliation{%
  \institution{CENTAI Institute, Turin, Italy}
  \institution{Eurecat, Barcelona, Spain}
  \city{\empty}
  \country{\empty}
}

%%
%% By default, the full list of authors will be used in the page
%% headers. Often, this list is too long, and will overlap
%% other information printed in the page headers. This command allows
%% the author to define a more concise list
%% of authors' names for this purpose.
%\renewcommand{\shortauthors}{Trovato et al.}

%%
%% The abstract is a short summary of the work to be presented in the
%% article.
\begin{abstract}
Centrality measures, quantifying the importance of vertices or edges, play a fundamental role in network analysis.
To date, triggered by some positive approximability results,
a large body of work has been devoted to studying centrality maximization,
where the goal is to maximize the centrality score of a target vertex by manipulating the structure of a given network.
On the other hand, due to the lack of such results,
only very little attention has been paid to centrality minimization, despite its practical usefulness.

In this study, we introduce a novel optimization model for local centrality minimization, where the manipulation is allowed only around the target vertex.
We prove the NP-hardness of our model and that the most intuitive greedy algorithm has a quite limited performance in terms of approximation ratio.
Then we design two effective approximation algorithms:
The first algorithm is a highly-scalable algorithm that has an approximation ratio unachievable by the greedy algorithm,
while the second algorithm is a bicriteria approximation algorithm that solves a continuous relaxation based on the Lov\'asz extension, using a projected subgradient method.
To the best of our knowledge, ours are the first polynomial-time algorithms with provable approximation guarantees for centrality minimization.
Experiments using a variety of real-world networks demonstrate the effectiveness of our proposed algorithms: Our first algorithm is applicable to million-scale graphs
and obtains much better solutions than those of scalable baselines,
while our second algorithm is rather strong against adversarial instances.
\end{abstract}

%
% The code below is generated by the tool at http://dl.acm.org/ccs.cfm.
% Please copy and paste the code instead of the example below.
%

\begin{CCSXML}
<ccs2012>
   <concept>
       <concept_id>10003752.10003809.10003635</concept_id>
       <concept_desc>Theory of computation~Graph algorithms analysis</concept_desc>
       <concept_significance>500</concept_significance>
       </concept>
   <concept>
       <concept_id>10003752.10003809.10003636</concept_id>
       <concept_desc>Theory of computation~Approximation algorithms analysis</concept_desc>
       <concept_significance>500</concept_significance>
       </concept>
 </ccs2012>
\end{CCSXML}

\ccsdesc[500]{Theory of computation~Graph algorithms analysis}
\ccsdesc[500]{Theory of computation~Approximation algorithms analysis}

%%
%% Keywords. The author(s) should pick words that accurately describe
%% the work being presented. Separate the keywords with commas.

\keywords{network analysis, harmonic centrality, approximation algorithms, submodularity}

%% A "teaser" image appears between the author and affiliation
%% information and the body of the document, and typically spans the
%% page.
%\begin{teaserfigure}
%  \includegraphics[width=\textwidth]{sampleteaser}
%  \caption{Seattle Mariners at Spring Training, 2010.}
%  \Description{Enjoying the baseball game from the third-base
%  seats. Ichiro Suzuki preparing to bat.}
%  \label{fig:teaser}
%\end{teaserfigure}

%\received{20 February 2007}
%\received[revised]{12 March 2009}
%\received[accepted]{5 June 2009}

%%
%% This command processes the author and affiliation and title
%% information and builds the first part of the formatted document.
\maketitle

\section{Introduction}\label{sec:intro}
Among the many analytical tools that social network analysis~\cite{Knoke+20} borrowed from graph theory, centrality measures play a fundamental role in a wide variety of analyses~\cite{Das+18}.
Centrality, which quantifies the \emph{importance} of vertices or edges only based on the graph structure, has found many applications, including, e.g., identification of important users or connections in social networks,
community detection~\cite{Newman+04}, anomaly detection~\cite{Maulana+21}, to name a few.

\emph{Local centrality minimization} is the problem of removing a few existing edges around a target vertex, so as to minimize its centrality score. 
A direct application can be found in the context of reducing the visibility, or influence, of a targeted harmful user in a social network, without explicitly blocking the user account.
In this scenario, to minimize the impact to the other decent users, it is quite reasonable to manipulate only the ties around the targeted user. 
The edges to be removed, identified by local centrality minimization, 
are the most important edges for the centrality (i.e., visibility or influence) of the target vertex. 
In this regard, another direct application is to keep satisfying influential users so that they are engaged in the platform.
The degree of satisfaction of such influencers often depends on how actually influential they are in the platform, i.e., how much their content is consumed by the network.
Local centrality minimization can be used for revealing the most important connections between the influencers and their followers, which are key in contributing to their  visibility.

While a large body of work has been devoted to studying \emph{centrality maximization}~\cite{Das+18}, much less attention has been paid to centrality minimization
(see Section \ref{sec:related} for a brief literature survey).
Generally speaking, the goal of centrality maximization is to maximize the centrality score of a target vertex by adding a limited number of edges to the network.
In many reasonable optimization models, the objective function becomes monotone and submodular~\cite{Fujishige05},
and thus a simple greedy algorithm admits a $(1-1/\mathrm{e})$-approximation~\cite{Nemhauser+78}.
This positive result makes the basis of various studies on centrality maximization~\cite{Bergamini+18,Castaldo+20,Crescenzi+16,DAngelo+19,Ishakian+12,Medya+18}.

The lack of positive approximability results has instead limited the attention on the centrality minimization problem, especially in its local variant.
Waniek et al.~\cite{Waniek+18} introduced several optimization models for local centrality minimization under some specific objectives and constraints, investigated the computational complexity of their models, and devised some algorithms.
Later, Waniek et al.~\cite{Waniek+23} investigated local centrality minimization from a game-theoretic point of view.
However, the work by Waniek et al.~\cite{Waniek+18,Waniek+23} proposed only (exponential-time) exact algorithms and heuristics.

In this paper, we study the local centrality minimization problem, adopting
the most well-established centrality measure called the \emph{harmonic centrality}~\cite{Das+18},
which quantifies the importance of vertices based on the level of reachability from the other vertices.
The harmonic centrality is known as an effective alternative to the closeness centrality~\cite{Boldi_Vigna_14},
which was employed in Waniek et al.~\cite{Waniek+18},
in the sense that unlike the closeness centrality, it is well-defined even in the case where a graph is not strongly connected.

Boldi and Vigna~\cite{Boldi_Vigna_14} showed that among all the known centrality measures,
only the harmonic centrality satisfies all the desirable axioms, namely the size axiom, density axiom, and score monotonicity axiom.
Recently, Murai and Yoshida~\cite{Murai_Yoshida_19} theoretically and empirically demonstrated that among well-known centrality measures,
the harmonic centrality is most stable (thus reliable) against the uncertainty of a given graph.

\subsection{Paper contributions and roadmap}
In this paper, we introduce a novel optimization model for local centrality minimization, where the harmonic centrality is employed as an objective function.
Specifically, in our model, given a directed graph $G=(V,A)$, a target vertex $v\in V$, and a budget $b\in \mathbb{Z}_{>0}$,
we aim to find a set of incoming edges of $v$ with size (no greater than) $b$ whose removal minimizes the harmonic centrality score of $v$, denoted by $h_G(v)$ (to be defined in Section~\ref{sec:problem}).

For our optimization model, we first analyze the computational complexity.
Specifically, we show that our model is NP-hard even on a very limited graph class (i.e., acyclic graphs),
by constructing a polynomial-time reduction from the minimum $k$-union problem.
Furthermore, we prove that the most intuitive greedy algorithm,
which iteratively removes an incoming edge of $v$ that maximally decreases the objective value, cannot achieve an approximation ratio of $o(|V|)$,
while any reasonable algorithm has an approximation ratio of $O(|V|)$.
This negative result motivates the design of algorithms that exploit the characteristics of our model.

We design two polynomial-time approximation algorithms.
The first algorithm is a highly-scalable algorithm that has an approximation ratio of $\sqrt{2h_G(v)}$.
We stress that as $\sqrt{2h_G(v)}= O(\sqrt{|V|})=o(|V|)$, this approximation ratio is unachievable by the above greedy algorithm.
Our algorithm first sorts the incoming neighbors of the target vertex $v$ in the decreasing order of their harmonic centrality scores on a slightly modified graph,
and then removes $b$ incoming edges from the top-$b$ vertices in the sorted list.
To prove the approximation ratio, we scrutinize the relationship between the harmonic centrality scores of the target vertex and its incoming neighbors.
In the end, we also prove the tightness of our analysis of the approximation ratio.

The second algorithm is a polynomial-time algorithm
that has a bicriteria approximation ratio of $(\frac{1}{\alpha},(\frac{1}{1-\alpha},\epsilon))$ for any $\alpha \in (0,1)$ and $\epsilon >0$.
That is, the algorithm finds a subset of incoming edges of the target vertex $v$ with size at most $b/\alpha$
but attains the objective value at most the original optimal value times $\frac{1}{1-\alpha}$ plus $\epsilon$.
Therefore, the algorithm approximates the original optimal value while violating the budget constraint to some bounded extent. 
To design the algorithm, we first introduce a continuous relaxation of our model.
To this end, we use the well-known extension of set functions, called the Lov\'asz extension~\cite{Lovasz83}.
An important fact is that the objective function of our model is submodular,
which guarantees that its Lov\'asz extension is (not necessarily differentiable but) convex.
Therefore, we can solve the relaxation (with an arbitrarily small error) using a projected subgradient method~\cite{Beck17}.
Once we get a fractional solution, we apply a simple probabilistic procedure and obtain a subset of incoming edges of the target vertex $v$.

Finally, our experiments on a variety of real-world networks show that our first algorithm is applicable to million-scale graphs
and obtains much better solutions than those of scalable baselines,
while our second algorithm is strong against adversarial instances.

In summary, our contributions are as follows:
\begin{itemize}
\leftskip=-5pt
\item We study the \emph{local harmonic centrality minimization problem}: We prove that it is NP-hard even on acyclic graphs, its objective function is submodular, and the most intuitive greedy algorithm cannot achieve $o(|V|)$-approximation (Section~\ref{sec:problem}).
\item We devise a highly-scalable algorithm with an approximation ratio of $\sqrt{2h_G(v)}$, which is unachievable by the greedy algorithm (Section~\ref{sec:fast}). 
\item We then devise a bicriteria approximation algorithm that solves a continuous relaxation based on the Lov\'asz extension, using a projected subgradient method (Sections~\ref{sec:bicriteria}~and~\ref{sec:algorithm_cont}).
\end{itemize}
To the best of our knowledge, ours are the first polynomial-time algorithms with provable approximation guarantees for centrality minimization.

\section{Related Work}\label{sec:related}
In this section, we review related literature about centrality minimization and maximization, submodular minimization, and other relevant applications in social networks analysis.

\smallskip
\noindent \textbf{Centrality minimization.}
The most related to the present paper is the work on centrality minimization by Waniek et al.~\cite{Waniek+18,Waniek+23} whose goal is to provide a methodology that contributes to hiding individuals in social networks from centrality-based network analysis algorithms.

More specifically, Waniek et al.~\cite{Waniek+18} introduced the following optimization model:
Given a directed graph $G=(V,A)$, a target vertex $v\in V$, a budget $b\in \mathbb{Z}_{>0}$, and a set $M$ of possible single-edge modifications,
we are asked to select at most $b$ actions in $M$
so as to minimize the centrality score of $v$ while satisfying some constraint on the influence of some vertices in the graph.
As a centrality measure, they considered the degree centrality, closeness centrality, and betweenness centrality.
They showed that except for the degree centrality case, the above model is NP-hard even when the constraint on the influence of vertices is ignored, and devised simple heuristics.

Waniek et al.~\cite{Waniek+23} investigated local centrality minimization from a game-theoretic point of view.
As a tool to analyze their game, they studied the following optimization model:
Given a directed graph $G=(V,A)$, a target vertex $v\in V$, a hiding parameter $\delta$, and a set $M$ as above, 
we are asked to find a minimal subset of $M$ guaranteeing that there are at least $\delta$ vertices having a centrality score greater than that of $v$.
As a centrality measure, they again considered the above three.
They showed that the model is 2-approximable for the degree centrality but is inapproximable within any logarithmic factor for the other two.
Note that the above approximation is just for the size of the output rather than the ranking of the centrality score of $v$ (or the centrality score of $v$).

Veremyev et al.~\cite{Veremyev+19} studied a global centrality minimization problem:
Given an undirected graph $G=(V,E)$ with cost $c_e\in \mathbb{R}_{\geq 0}$ for each $e\in E$, a target vertex subset $S\subseteq V$, and a budget $b\in \mathbb{Z}_{>0}$,
we are asked to find $F\subseteq E$ whose removal minimizes the centrality score of the target vertex subset $S$ subject to the budget constraint $\sum_{e\in F}c_e\leq b$.
The centrality score of a vertex subset is defined as a generalization of the centrality score of a vertex.
As a centrality measure, they considered a quite general one, based on distance between vertices,
which includes the harmonic centrality as a special case.
They proved that the above model is NP-hard for any centrality measure included in the above,
and as a by-product of the analysis, they also mentioned the NP-hardness of its local variant, which coincides with (the undirected-graph counterpart of) our proposed model.
In the present paper, by focusing on the harmonic centrality, we prove that our model is NP-hard even on a very limited graph class (i.e., acyclic graphs).
On a positive side, they presented an exact algorithm based on mathematical programming and greedy heuristics.
Very recently, Liu et al.~\cite{Liu+23} addressed another global centrality minimization problem,
where the objective function is a centrality measure called the information centrality and the connectivity of the resulting graph is guaranteed.

\smallskip
\noindent \textbf{Centrality maximization.}
Centrality maximization has more actively been studied in the literature
(e.g., \cite{Bergamini+18,Castaldo+20,Crescenzi+16,DAngelo+19,Ishakian+12,Medya+18}),
where the most related to ours is due to Crescenzi et al.~\cite{Crescenzi+16}.
They introduced the harmonic centrality maximization problem,
where given a directed graph $G=(V,A)$, a target vertex $v\in V$, and a budget $b\in \mathbb{Z}_{>0}$,
we are asked to insert at most $b$ incoming edges of $v$ so as to maximize the harmonic centrality score of $v$.
Our proposed optimization model can be seen as a minimization counterpart of their problem.
They proved that the problem is APX-hard, but 
devised a polynomial-time $(1-1/\mathrm{e})$-approximation algorithm based on the submodularity of the objective function.
Finally, we note that there is another class of problems also called centrality maximization,
where the goal is to find $S\subseteq V$ that has the maximum group centrality score (e.g.,~\cite{Angriman+20,Angriman+21,Bergamini+18,Chen+16,Li+19,Ishakian+12,Mahmoody+16,Mumtaz+17,Pellegrina23,Zhao+17}), which is less relevant to the present paper.

\smallskip
\noindent \textbf{Submodular minimization.}
Submodular minimization is one of the most well-studied problem classes in combinatorial optimization.
Among the literature, the most related work is due to Svitkina and Fleischer~\cite{Svitkina_Fleischer_11}.
They stated that a polynomial-time $(\frac{1}{\alpha},\frac{1}{1-\alpha})$-bicriteria approximation algorithm
for submodular minimization with a cardinality upper bound (and thus for our proposed model) is possible for any $\alpha \in (0,1)$,
using techniques in Hayrapetyan et al.~\cite{Hayrapetyan+_05}.
However, Hayrapetyan et al.~\cite{Hayrapetyan+_05} addressed another problem called the minimum-size bounded-capacity cut, 
where the function in the constraint instead of the objective function is submodular, which is not the case in submodular minimization with a cardinality upper bound.
Therefore, the above statement is not trivial and even our proposed model should be handled in a formal way.
Lov\'asz extension has actively been used for developing novel network analysis algorithms (e.g., \cite{Konar+21,Konar+22,Rangapuram+13}).

\smallskip
\noindent \textbf{Applications.}
Reducing the visibility or influence of target users in social networks has been studied in the context of influence minimization~\cite{Khalil+13,Wang+20,Yang+19}. 
All existing studies are based on some influence diffusion models such as the independent cascade model~\cite{Goldenberg+01_1,Goldenberg+01_2} and the linear threshold model~\cite{Kempe+03}.
Unlike those, our model does not assume any influence diffusion model, but is just based on the network structure.
Very recently, Fabbri et al.~\cite{Fabbri+22} and Coupette et al.~\cite{Coupette+23} addressed the problem of reducing the exposure to harmful contents in social media networks.  

On the other hand, identifying the users and/or connections that play a key role for user engagement in social networks has also attracted much attention.
Bhawalkar et al.~\cite{Bhawalkar+15} initiated this kind of study from an optimization perspective.
They invented a model that aims to find a group of users whose permanent use of the service guarantees user engagement as much as possible,
and designed polynomial-time algorithms for some cases.
Later, Zhang et al.~\cite{Zhang+17} and Zhu et al.~\cite{Zhu+18} introduced variants of the above model, and devised intuitive heuristics.

\section{Problem Formulation and Characterization}\label{sec:problem}
In this section, we mathematically formulate our problem (Problem 1), 
and prove its NP-hardness (Theorem 1) and the submodularity of the objective function (Theorem 2). Finally, we show the quite limited performance of the greedy algorithm (Theorem 3). 

Let $G=(V,A)$ be a directed graph (or \emph{digraph} for short). 
Throughout the paper, we assume that digraphs are simple, that is, there exist neither self-loops nor multiple edges. 
For $F\subseteq A$, we define $G\setminus F$ as the subgraph of $G$ that is constructed by removing all edges in $F$ from $G$, i.e., $G\setminus F= (V,A\setminus F)$.
For $v\in V$, we denote by $\rho(v)$ the set of incoming edges of $v$, i.e., $\rho(v)=\{(u,v)\in A\mid u\in V\}$.
For $v\in V$, let $h_G(v)$ be the harmonic centrality score of $v$ on a digraph $G$, i.e.,
\begin{align*}
h_G(v)=\sum_{u\in V\setminus \{v\}}\frac{1}{d_G(u,v)},
\end{align*}
where $d_G(u,v)$ is the (shortest-path) distance from $u\in V$ to $v\in V$ on $G$, and by convention, $d_G(u,v) = \infty$ when $v$ is not reachable from $u$. Note that, contrarily to other centrality measures, even in the case where a digraph is not strongly connected, the harmonic centrality is still well-defined (assuming by convention that $1/\infty = 0$).
Intuitively, the harmonic centrality quantifies the importance of a given vertex $v$ based on the level of reachability from the other vertices.
The problem we tackle in this paper is formalized as follows:
\begin{problem}[Local harmonic centrality minimization]\label{prob:centrality_min}
Given a digraph $G=(V,A)$, a target vertex $v\in V$, and a budget $b\in \mathbb{Z}_{>0}$,
we are asked to find $F\subseteq \rho(v)$ with $|F|\leq  b$ whose removal
minimizes the harmonic centrality of $v\in V$, i.e., $\obj(F)\coloneqq h_{G\setminus F}(v)$.
\end{problem}

By constructing a polynomial-time reduction from the NP-hard optimization problem called the \emph{minimum $k$-union}, we can prove the following. 
The proof can be found in Appendix~\ref{appendix:proof_hardness}. 
\begin{theorem}\label{thm:hardness}
Problem~\ref{prob:centrality_min} is NP-hard even on acyclic graphs.
\end{theorem}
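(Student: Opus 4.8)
The plan is to reduce from the NP-hard minimum $k$-union problem: given a ground set $U=\{e_1,\dots,e_n\}$, a family $\mathcal{S}=\{S_1,\dots,S_m\}$ of subsets of $U$, and an integer $k$, find $k$ sets whose union has minimum cardinality. I would take the NP-hardness of this problem as given and build a polynomial-time reduction to Problem~\ref{prob:centrality_min}.

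Given such an instance, I would construct a three-layer acyclic digraph $G=(V,A)$. Introduce one element-vertex $x_i$ for each $e_i\in U$, one set-vertex $w_j$ for each $S_j\in\mathcal{S}$, and the target vertex $v$. Add an arc $(x_i,w_j)$ whenever $e_i\in S_j$, and an arc $(w_j,v)$ for every $j$. Since every arc points strictly forward (element $\to$ set $\to$ target), $G$ is acyclic, as required. Moreover the incoming edges of $v$ are exactly $\rho(v)=\{(w_j,v)\mid 1\le j\le m\}$, so the restriction $F\subseteq\rho(v)$ corresponds precisely to choosing which set-vertices remain connected to $v$. Finally I set the budget to $b=m-k$.

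The crux of the argument is an exact evaluation of the objective. For any $F\subseteq\rho(v)$, let $K=\{j : (w_j,v)\notin F\}$ be the indices of the retained arcs. On $G\setminus F$ the only way to reach $v$ is through a retained set-vertex, so $d_{G\setminus F}(w_j,v)=1$ for $j\in K$ (and $\infty$ otherwise), while $d_{G\setminus F}(x_i,v)=2$ exactly when $e_i$ lies in some retained set $S_j$ with $j\in K$ (and $\infty$ otherwise). From this I would conclude $\obj(F)=h_{G\setminus F}(v)=|K|+\tfrac{1}{2}\bigl|\bigcup_{j\in K}S_j\bigr|$, the point being that the distinct distance values $1$ and $2$ cleanly separate the set-vertex and element-vertex contributions.

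It then remains to argue equivalence of the two optimization problems. Because removing arcs can only decrease reachability, $\obj$ is monotone non-increasing under edge removal, so an optimal solution uses the full budget, giving $|F|=m-k$ and hence $|K|=k$. For such solutions the term $|K|=k$ is constant, so minimizing $\obj(F)$ is equivalent to minimizing $\bigl|\bigcup_{j\in K}S_j\bigr|$ over all size-$k$ index sets $K$, which is exactly the minimum $k$-union objective; optimal solutions thus correspond in both directions. The main obstacle I anticipate is not structural but careful bookkeeping: verifying the distance computation (that no unintended or alternative shorter paths to $v$ exist) and handling the ``at most $b$'' versus ``exactly $b$'' distinction through the monotonicity argument above.
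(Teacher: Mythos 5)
Your reduction is exactly the paper's: the same three-layer acyclic gadget (element vertices $\to$ set vertices $\to$ target), the same budget $b=m-k$, and the same evaluation of the objective as $|K|+\tfrac{1}{2}\bigl|\bigcup_{j\in K}S_j\bigr|$, so the proposal is correct and takes essentially the same approach. Your explicit monotonicity argument for why optimal solutions exhaust the budget (each retained arc $(w_j,v)$ contributes $1$, so removing it strictly helps) is a small point the paper passes over implicitly, but it does not change the substance.
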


We next show that the objective function $\obj$ of Problem~\ref{prob:centrality_min} is submodular,
which helps us design our bicriteria approximation algorithm in Section \ref{sec:bicriteria}.
Let $S$ be a finite set.
A set function $f\colon 2^S\rightarrow\mathbb{R}$ is said to be \emph{submodular} if for any $X,Y\subseteq S$, it holds that
\begin{align*}
f(X)+f(Y)\geq f(X\cup Y) + f(X\cap Y).
\end{align*}
We prove the following in Appendix~\ref{appendix:proof_submodular}: 
\begin{theorem}\label{thm:submodular}
For any $G=(V,A)$ and $v\in V$, the objective function $\obj$ of Problem 1 is submodular.
\end{theorem}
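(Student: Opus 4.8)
The plan is to exploit two reductions. First, since $\obj(F)=\sum_{u\in V\setminus\{v\}} 1/d_{G\setminus F}(u,v)$ is a finite sum and a nonnegative combination of submodular functions is submodular, it suffices to prove that for each fixed $u\in V\setminus\{v\}$ the single term $g_u(F)\coloneqq 1/d_{G\setminus F}(u,v)$ is submodular on $2^{\rho(v)}$. Second, and this is the structural heart of the argument, I would establish a decoupling identity that isolates the dependence on $F$ into the final edge of each shortest path. Concretely, let $\hat d(u,w)\coloneqq d_{G\setminus\rho(v)}(u,w)$ be the distance from $u$ to $w$ in the digraph obtained by deleting \emph{all} incoming edges of $v$; I claim
\[
d_{G\setminus F}(u,v)=\min_{(w,v)\in\rho(v)\setminus F}\bigl(\hat d(u,w)+1\bigr),
\]
where the minimum over the empty set is $+\infty$. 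The justification is that any shortest $u$--$v$ path is simple, so it visits $v$ only as its endpoint; hence its last edge is some $(w,v)\in\rho(v)\setminus F$ and its prefix is a $u$--$w$ path avoiding $v$ entirely. Since deleting $\rho(v)$ removes exactly the edges that could route a walk \emph{through} $v$ but touches no edge disjoint from $v$, the length of a shortest $v$-avoiding $u$--$w$ path is precisely $\hat d(u,w)$, which is \emph{independent of $F$}. This is the step I expect to be the main obstacle, since it requires arguing in both directions that the $v$-avoiding shortest $u$--$w$ path in $G\setminus F$ and in $G\setminus\rho(v)$ coincide.

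Granting the identity, I would set $a_u(w)\coloneqq 1/(\hat d(u,w)+1)\in[0,1]$ (with the convention $1/\infty=0$), which are constants not depending on $F$, and rewrite
\[
g_u(F)=\frac{1}{\min_{(w,v)\in\rho(v)\setminus F}(\hat d(u,w)+1)}=\max_{(w,v)\in\rho(v)\setminus F} a_u(w).
\]
Thus the whole problem collapses to a purely combinatorial fact: for nonnegative constants $\{a_e\}_{e\in\rho(v)}$, the set function $g(F)=\max_{e\in\rho(v)\setminus F} a_e$ (with $\max\emptyset=0$) is submodular.

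For this last fact I would use a layer-cake decomposition. For $t\ge 0$ put $B_t\coloneqq\{e : a_e\ge t\}$; then $g(F)\ge t$ holds iff some element of $B_t$ survives the deletion, i.e. iff $B_t\not\subseteq F$, so
\[
g(F)=\int_0^\infty \mathbf{1}\bigl[B_t\not\subseteq F\bigr]\,dt .
\]
A short case analysis shows that for any fixed set $B$ the indicator $F\mapsto\mathbf{1}[B\subseteq F]$ is supermodular, hence $\mathbf{1}[B_t\not\subseteq F]=1-\mathbf{1}[B_t\subseteq F]$ is submodular for every $t$; integrating this family against the nonnegative measure $dt$ preserves submodularity, so each $g_u$ is submodular. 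Summing over all $u\in V\setminus\{v\}$ then yields the submodularity of $\obj$, completing the proof.
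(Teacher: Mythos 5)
Your proof is correct, but it takes a genuinely different route from the paper's. The paper verifies the diminishing-returns characterization directly: for $E\subset F\subseteq\rho(v)$ and $e\in\rho(v)\setminus F$ it localizes the marginal change $\obj(\cdot\cup\{e\})-\obj(\cdot)$ to the set $U_{G\setminus E}(v,e)$ of vertices \emph{all} of whose shortest paths to $v$ use $e$, and then compares the two marginals term by term, using the containment $U_{G\setminus E}(v,e)\subseteq U_{G\setminus F}(v,e)$ together with the facts that $d_{G\setminus(F\cup\{e\})}(u,v)\geq d_{G\setminus F}(u,v)$ and that $d_{G\setminus E}(u,v)=d_{G\setminus F}(u,v)$ for $u\in U_{G\setminus E}(v,e)$ (the latter holding precisely because all removed edges lie in $\rho(v)$, so a shortest path whose last edge is $e$ survives the removal of $F\setminus E$). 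You instead prove a decoupling identity, $d_{G\setminus F}(u,v)=\min_{(w,v)\in\rho(v)\setminus F}\bigl(d_{G\setminus\rho(v)}(u,w)+1\bigr)$, which is valid: a shortest $u$--$v$ path is simple, so its prefix avoids $v$ and hence uses no edge of $\rho(v)$, while conversely any shortest $u$--$w$ path in $G\setminus\rho(v)$ automatically avoids $v$ (which has in-degree $0$ there) and can be closed by the surviving edge $(w,v)$. This collapses each term to $\max_{(w,v)\in\rho(v)\setminus F}a_u(w)$ with $F$-independent constants, and your layer-cake argument for the submodularity of a max over the complement is sound (one can also note that $X\mapsto\max_{e\in X}a_e$ is submodular and that submodularity is preserved under complementing the argument). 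What your route buys is structural insight: it exhibits $\obj$ as a nonnegative combination of elementary ``bottleneck'' submodular functions, gives monotonicity for free, and in fact your identity is the same observation that powers the paper's Lemma~\ref{lem:fast_UB} for Algorithm~\ref{alg:fast}, where shortest paths are decomposed through the last incoming neighbor and $d_{G\setminus F_\mathrm{ALG}}(u,w(u))=d_{G\setminus\rho(v)}(u,w(u))$; so your argument unifies the submodularity proof with the approximation analysis. What the paper's route buys is brevity and self-containment: it argues from the definition alone, without needing the global distance identity, and the authors note it could alternatively be deduced from the known submodularity of the harmonic-centrality \emph{maximization} objective of Crescenzi et al.
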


Finally, we prove that the most intuitive greedy algorithm does not have any non-trivial approximation ratio. 
Specifically, we consider the algorithm that iteratively removes an incoming edge of the target vertex $v$
that maximally decreases the harmonic centrality score of $v$, until it exhausts the budget.
For reference, the pseudo-code is given in Algorithm~\ref{alg:greedy} in Appendix~\ref{appendix:greedy}.
This algorithm runs in $O(b|\rho(v)|(|V|+|A|))$ time.
Note that, unlike many submodular maximization algorithms,
the \emph{lazy evaluation} technique~\cite{Minoux05} cannot be used to obtain a practically efficient implementation. 
The proof of the following is available in Appendix~\ref{appendix:proof_greedy}. 

\begin{theorem}\label{prop:greedy}
The greedy algorithm has no approximation ratio of $o(|V|)$ for Problem~\ref{prob:centrality_min},
while any algorithm that outputs $F\subseteq \rho(v)$ with $|F|=b$ has an approximation ratio of $O(|V|)$.
\end{theorem}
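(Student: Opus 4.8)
The plan is to treat the two claims separately, since they call for opposite kinds of reasoning: an elementary sandwich bound for the positive $O(|V|)$ statement, and an explicit adversarial construction for the negative statement about greedy.

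\emph{The $O(|V|)$ upper bound.} For any feasible $F\subseteq\rho(v)$ I would bound the ratio $\obj(F)/\mathrm{OPT}$ from above by two elementary estimates. First, since $d_{G\setminus F}(u,v)\geq 1$ for every $u\neq v$, each summand of $h_{G\setminus F}(v)$ is at most $1$, so $\obj(F)\leq |V|-1$ for \emph{any} output. Second, I would lower-bound $\mathrm{OPT}$: if $b\geq |\rho(v)|$ the instance is trivial ($\mathrm{OPT}=0$, matched by deleting every in-edge), so assume $b<|\rho(v)|$. Then any feasible set leaves at least one in-edge $(u,v)$ intact, whence $d_{G\setminus F}(u,v)=1$ and $\obj(F)\geq 1$; in particular $\mathrm{OPT}\geq 1$. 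Combining the two estimates yields $\obj(F)/\mathrm{OPT}\leq |V|-1=O(|V|)$ for any $F$ with $|F|=b$.

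\emph{The $\Omega(|V|)$ lower bound for greedy.} Here I would exhibit an explicit family of (acyclic) instances on which greedy is $\Omega(|V|)$ times worse than optimal, built from a \emph{joint bottleneck} plus a \emph{decoy}. Let $v$ be the sink. Add $b$ ``cut'' vertices $c_1,\dots,c_b$, each with a single out-edge $(c_j,v)$, and a large set of $N$ ``source'' vertices $s_1,\dots,s_N$ where every $s_i$ points to \emph{all} of $c_1,\dots,c_b$; thus each $s_i$ reaches $v$ at distance $2$ through any surviving $c_j$. Add a decoy $d$ with edge $(d,v)$ and a single follower $t$ with edge $(t,d)$. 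The in-edges of $v$ are then $(c_1,v),\dots,(c_b,v),(d,v)$, and the budget is $b$, so $|V|=N+b+3$. The optimal solution removes the $b$ cut edges, making all $c_j$ and all $s_i$ unreachable and leaving only $d$ (distance $1$) and $t$ (distance $2$), so $\mathrm{OPT}=\tfrac32$. Greedy, however, first compares marginal gains: deleting $(d,v)$ disconnects $d$ and $t$ for a gain of $1+\tfrac12=\tfrac32$, whereas deleting any single $(c_j,v)$ disconnects only $c_j$ (the sources reroute through the remaining cut vertices) for a gain of exactly $1$. Greedy therefore wastes its first unit of budget on the decoy, and with the remaining $b-1$ units it deletes $b-1$ of the $b$ cut edges, each step again netting gain $1$. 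One cut edge always survives, so all $N$ sources stay at distance $2$ and $\obj\geq N/2$. The ratio is thus $\Omega(N)=\Omega(|V|)$, so greedy has no $o(|V|)$ guarantee.

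The main obstacle is this lower-bound construction, specifically forcing greedy to \emph{provably} follow the myopic path at every step. I would need to verify (i) that the decoy's marginal gain \emph{strictly} exceeds that of a cut edge, so greedy is diverted rather than tie-broken back on track, and (ii) that the bottleneck is genuinely joint, i.e.\ deleting any proper subset of $\{(c_j,v)\}$ leaves every source at distance $2$, so that the large reward for disconnecting the sources materializes only upon deleting the \emph{last} cut edge, which greedy can never reach once a budget unit is spent on the decoy. Confirming that no alternative greedy choice escapes the trap, and that the claimed value $\mathrm{OPT}=\tfrac32$ is truly optimal, is the delicate but routine remainder of the argument.
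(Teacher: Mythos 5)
Your proposal is correct and takes essentially the same route as the paper: the $O(|V|)$ part is the identical sandwich argument ($\obj(F)\le |V|-1$ for any output, $\mathrm{OPT}\ge 1$ whenever $b<|\rho(v)|$, trivial case otherwise), and your decoy-plus-joint-bottleneck gadget is structurally the same as the paper's construction, with your $d$, $t$, $c_1,\dots,c_b$, $s_1,\dots,s_N$ playing the roles of the paper's $n_L$, $o_L$, $N_R$, $O_R$ and the same gain comparison ($3/2$ for the decoy edge versus exactly $1$ for each bottleneck edge) forcing greedy to waste a budget unit and never disconnect the sources. The only cosmetic difference is that you let the number of sources $N$ vary independently of $b$, whereas the paper fixes both to $k$; this changes nothing essential in the $\Omega(|V|)$ ratio.
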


\section{Scalable Approximation Algorithm}\label{sec:fast}
In this section, we present a highly-scalable $\sqrt{2h_G(v)}$-approximation algorithm for Problem~\ref{prob:centrality_min}. 

\subsection{Algorithm}
Let $N_\text{in}(v)$ be the set of incoming neighbors of $v$, i.e., $N_\text{in}(v)=\{w\in V\mid (w,v)\in \rho(v)\}$. 
The intuition behind our algorithm is quite simple: 
As long as there exists a vertex $w\in N_\text{in}(v)$ that has a large harmonic centrality score, so does the target vertex $v$. 
This means that it is urgent to remove incoming edges of $v$ that come from vertices having large harmonic centrality scores. 
Note that our algorithm and analysis consider the harmonic centrality scores on $G\setminus \rho(v)$ rather than $G$; 
this is essential to obtain our approximation ratio. 
Specifically, our algorithm first sorts the elements of $N_\text{in}(v)$ as $(w_1,\dots, w_{|\rho(v)|})$ so that 
$h_{G\setminus \rho(v)}(w_1)\geq \cdots \geq h_{G\setminus \rho(v)}(w_{|\rho(v)|})$ 
and just returns $\{(w_1,v),\dots, (w_b,v)\}$. 
For reference, the entire procedure is described in Algorithm~\ref{alg:fast}. 

\begin{algorithm}[t]
\caption{$\sqrt{2h_G(v)}$-approximation algorithm for Problem~\ref{prob:centrality_min}}
\label{alg:fast}
\SetKwInOut{Input}{Input}
\SetKwInOut{Output}{Output} \Input{\ $G=(V,A)$, $v\in V$, and $b\in \mathbb{Z}_{>0}$}
\Output{\ $F\subseteq \rho(v)$ with $|F|\leq b$}
Sort the elements of $N_\mathrm{in}(v)$  as $(w_1,\dots, w_{|\rho(v)|})$
so that $h_{G\setminus \rho(v)}(w_1)\geq \cdots \geq h_{G\setminus \rho(v)}(w_{|\rho(v)|})$\;
\Return $\{(w_1,v),\dots, (w_b,v)\}$\;
\end{algorithm}

The algorithm is highly scalable. 
Indeed, the time complexity of Algorithm~\ref{alg:fast} is dominated by the part of computing the harmonic centrality scores of vertices in $N_\mathrm{in}(v)$, 
which just takes $O(|\rho(v)|(|V|+|A|))$ time. 
Therefore, the algorithm is asymptotically $b$ times faster than the greedy algorithm (Algorithm~\ref{alg:greedy}).

\subsection{Analysis}
From now on, we analyze the approximation ratio of Algorithm~\ref{alg:fast}. 
The following lemma demonstrates that the optimal value can be lower bounded 
using the maximum harmonic centrality score over the remaining incoming neighbors of $v$ in the resulting graph: 
\begin{lemma}\label{lem:fast_LB}
Let $F^*$ be an optimal solution to Problem~\ref{prob:centrality_min} 
and $N^*$ the vertex subset corresponding to $F^*$, i.e., $N^*=\{w\in V\mid (w,v)\in F^*\}$. 
Then it holds that 
\begin{align*}
f_{(G,v)}(F^*)\geq \frac{1}{2}\left(\max_{w\in N_\mathrm{in}(v)\setminus N^*}h_{G\setminus \rho(v)}(w)+1\right).
\end{align*}
\end{lemma}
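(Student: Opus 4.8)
The plan is to exploit the single incoming edge of $v$ that the optimum is forced to keep. Let $w^\ast\in\argmax_{w\in N_{\mathrm{in}}(v)\setminus N^*}h_{G\setminus \rho(v)}(w)$ be the incoming neighbor realizing the maximum on the right-hand side (in the only nontrivial case $b<|\rho(v)|$, this set is nonempty). Since $w^\ast\notin N^*$, the edge $(w^\ast,v)$ does \emph{not} belong to $F^*$, so it still lies in $G\setminus F^*$; in particular $d_{G\setminus F^*}(w^\ast,v)=1$. The idea is that this surviving edge lets every vertex reaching $w^\ast$ also reach $v$ almost as cheaply, which transfers the large harmonic score of $w^\ast$ onto $v$.

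First I would establish the key distance comparison. Because $F^*\subseteq\rho(v)$, the graph $G\setminus\rho(v)$ is a spanning subgraph of $G\setminus F^*$, so removing fewer edges cannot lengthen shortest paths: $d_{G\setminus F^*}(u,w^\ast)\le d_{G\setminus\rho(v)}(u,w^\ast)$ for every $u\in V$. Appending the surviving edge $(w^\ast,v)$ and using the triangle inequality then gives, for every $u\in V\setminus\{v,w^\ast\}$,
\begin{align*}
d_{G\setminus F^*}(u,v)\;\le\; d_{G\setminus F^*}(u,w^\ast)+1\;\le\; d_{G\setminus\rho(v)}(u,w^\ast)+1.
\end{align*}
Working with $G\setminus\rho(v)$ here is precisely what makes the subgraph step valid, since that graph contains none of the edges the optimum might delete.

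Next I would pass to reciprocals. Writing $d_u\coloneqq d_{G\setminus\rho(v)}(u,w^\ast)$, the elementary bound $d_u+1\le 2d_u$ for every integer $d_u\ge 1$ (with the convention $1/\infty=0$ when $w^\ast$ is unreachable, making the inequality trivial) yields $\tfrac{1}{d_{G\setminus F^*}(u,v)}\ge\tfrac{1}{d_u+1}\ge\tfrac{1}{2d_u}$. Summing over $u\in V\setminus\{v,w^\ast\}$ and adding the term $u=w^\ast$, which contributes $1$ since $d_{G\setminus F^*}(w^\ast,v)=1$, gives
\begin{align*}
\obj(F^*)=h_{G\setminus F^*}(v)\;\ge\; 1+\frac12\sum_{u\in V\setminus\{v,w^\ast\}}\frac{1}{d_u}.
\end{align*}

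Finally I would reconcile the sum with $h_{G\setminus\rho(v)}(w^\ast)=\sum_{u\in V\setminus\{w^\ast\}}1/d_u$. The only discrepancy is the single missing term $u=v$, which satisfies $1/d_{G\setminus\rho(v)}(v,w^\ast)\le 1$; hence $\sum_{u\in V\setminus\{v,w^\ast\}}1/d_u\ge h_{G\setminus\rho(v)}(w^\ast)-1$. Substituting, $\obj(F^*)\ge 1+\tfrac12\bigl(h_{G\setminus\rho(v)}(w^\ast)-1\bigr)=\tfrac12\bigl(h_{G\setminus\rho(v)}(w^\ast)+1\bigr)$, which is the claimed bound. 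I expect the only delicate points to be the bookkeeping of the two boundary terms---the $u=w^\ast$ term supplying the additive $+1$ and the $u=v$ term excluded from the sum---together with the justification that $G\setminus\rho(v)$, rather than $G$, is the correct reference graph for the distance-monotonicity step.
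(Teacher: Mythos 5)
Your proof is correct and follows essentially the same argument as the paper: the triangle inequality through the surviving edge $(w^\ast,v)$, the $d+1\le 2d$ bound, and the bookkeeping of the $u=w^\ast$ and $u=v$ boundary terms. The only (cosmetic) difference is that you apply the subgraph monotonicity between $G\setminus F^*$ and $G\setminus\rho(v)$ to distances at the start, whereas the paper works entirely in $G\setminus F^*$ and applies the same monotonicity to the harmonic centrality $h_{G\setminus F^*}(w)\geq h_{G\setminus\rho(v)}(w)$ in the last step.
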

\begin{proof}
For any $w\in N_\mathrm{in}(v)\setminus N^*$, we have 
\begin{align*}
&f_{(G,v)}(F^*)=h_{G\setminus F^*}(v)=\sum_{u\in V\setminus \{v\}}\frac{1}{d_{G\setminus F^*}(u,v)}\\
&\geq \sum_{u\in V\setminus \{v\}}\frac{1}{d_{G\setminus F^*}(u,w)+1}=1+\sum_{u\in V\setminus \{w,v\}}\frac{1}{d_{G\setminus F^*}(u,w)+1}\\
&\geq 1+\frac{1}{2}\sum_{u\in V\setminus \{w,v\}}\frac{1}{d_{G\setminus F^*}(u,w)}\geq \frac{1}{2}+\frac{1}{2}\sum_{u\in V\setminus \{w\}}\frac{1}{d_{G\setminus F^*}(u,w)}\\
&=\frac{1}{2}\left(h_{G\setminus F^*}(w)+1\right)
\geq \frac{1}{2}\left(h_{G\setminus \rho(v)}(w)+1\right), 
\end{align*}
where the first inequality follows from the triangle inequality of distance $d_{G\setminus F^*}$, 
the third equality follows from $d_{G\setminus F^*}(w,w)=0$, 
and the second inequality follows from the fact that the addition of $1$ to the denominator makes it at most twice the original. 
The arbitrariness of the choice of $w\in N_\mathrm{in}(v)\setminus N^*$ derives the statement. 
\end{proof}

On the other hand, the next lemma upper bounds the objective value of the output of Algorithm~\ref{alg:fast} using the harmonic centrality scores 
of the remaining incoming neighbors of $v$ in the resulting graph: 
\begin{lemma}\label{lem:fast_UB}
Let $F_\mathrm{ALG}$ be the output of Algorithm~\ref{alg:fast} 
and $N_\mathrm{ALG}$ the vertex subset corresponding to $F_\mathrm{ALG}$, i.e., $N_\mathrm{ALG}=\{w\in V\mid (w,v)\in F_\mathrm{ALG}\}$. 
Then we have 
\begin{align*}
f_{(G,v)}(F_\mathrm{ALG})\leq (|\rho(v)|-b)+\sum_{w\in N_\mathrm{in}(v)\setminus N_\mathrm{ALG}}h_{G\setminus \rho(v)}(w).
\end{align*}
\end{lemma}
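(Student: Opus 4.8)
The plan is to expand $f_{(G,v)}(F_\mathrm{ALG})=h_{G\setminus F_\mathrm{ALG}}(v)=\sum_{u\in V\setminus\{v\}}1/d_{G\setminus F_\mathrm{ALG}}(u,v)$ and bound each term by quantities measured on the fully-pruned graph $G\setminus\rho(v)$, since that is where the right-hand side lives. The crucial structural observation is that after removing $F_\mathrm{ALG}$ the only surviving incoming edges of $v$ are those from $N_\mathrm{in}(v)\setminus N_\mathrm{ALG}$. Hence for any $u$ with $d_{G\setminus F_\mathrm{ALG}}(u,v)<\infty$, a shortest $u$-$v$ path in $G\setminus F_\mathrm{ALG}$ ends with an edge $(w,v)$ for some $w\in N_\mathrm{in}(v)\setminus N_\mathrm{ALG}$, and (being simple) its prefix from $u$ to $w$ does not pass through $v$. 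Because that prefix therefore uses no incoming edge of $v$, it is also a walk in $G\setminus\rho(v)$, which yields the key inequality $d_{G\setminus\rho(v)}(u,w)\le d_{G\setminus F_\mathrm{ALG}}(u,v)-1$. Note this is exactly the appearance of $G\setminus\rho(v)$ (rather than $G\setminus F_\mathrm{ALG}$) that the algorithm's sorting is tailored to.

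From here I would pass to reciprocals and relax the minimizing neighbor to a full sum: since the relevant $w$ lies in $N_\mathrm{in}(v)\setminus N_\mathrm{ALG}$ and every summand is nonnegative,
\[
\frac{1}{d_{G\setminus F_\mathrm{ALG}}(u,v)}\le\frac{1}{d_{G\setminus\rho(v)}(u,w)+1}\le\sum_{w\in N_\mathrm{in}(v)\setminus N_\mathrm{ALG}}\frac{1}{d_{G\setminus\rho(v)}(u,w)+1}.
\]
For $u$ that cannot reach $v$ the left-hand side is $0$ and the inequality is trivial. Summing over $u\in V\setminus\{v\}$ and exchanging the order of summation gives
\[
h_{G\setminus F_\mathrm{ALG}}(v)\le\sum_{w\in N_\mathrm{in}(v)\setminus N_\mathrm{ALG}}\ \sum_{u\in V\setminus\{v\}}\frac{1}{d_{G\setminus\rho(v)}(u,w)+1}.
\]

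The final step isolates the term $u=w$ in the inner sum, which contributes $1/(d_{G\setminus\rho(v)}(w,w)+1)=1$ and, summed over the $|\rho(v)|-b$ surviving neighbors, accounts precisely for the additive $(|\rho(v)|-b)$. For the remaining $u\in V\setminus\{v,w\}$ I would use $1/(d+1)\le 1/d$ to pass to $\sum_{u\in V\setminus\{v,w\}}1/d_{G\setminus\rho(v)}(u,w)$, and then observe that this is at most $h_{G\setminus\rho(v)}(w)=\sum_{u\in V\setminus\{w\}}1/d_{G\setminus\rho(v)}(u,w)$, since we have merely dropped the nonnegative term $u=v$. Combining the two contributions over all $w\in N_\mathrm{in}(v)\setminus N_\mathrm{ALG}$ yields the claimed bound.

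The main obstacle, and the only genuinely nonroutine point, is getting the direction of the distance comparison right. Because $G\setminus\rho(v)$ is a subgraph of $G\setminus F_\mathrm{ALG}$, distances can only \emph{grow} when passing to $G\setminus\rho(v)$, so a naive term-by-term comparison of $d_{G\setminus F_\mathrm{ALG}}(u,w)$ with $d_{G\setminus\rho(v)}(u,w)$ goes the wrong way for an upper bound: a shortest $u$-$w$ path in $G\setminus F_\mathrm{ALG}$ could in principle be shortened by detouring through the still-reachable $v$. The fix is to never compare $u$-to-$w$ distances directly, but instead to lower-bound the $u$-to-$v$ distance using the $v$-avoiding prefix of a shortest $u$-$v$ path, which is automatically a walk in $G\setminus\rho(v)$. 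This sidesteps the detour-through-$v$ issue entirely and is what makes the harmonic scores on $G\setminus\rho(v)$ the correct objects to appear on the right-hand side.
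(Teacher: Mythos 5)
Your proof is correct and follows essentially the same route as the paper's: both rest on the observation that a shortest $u$-$v$ path in $G\setminus F_\mathrm{ALG}$ must end at some surviving in-neighbor $w\in N_\mathrm{in}(v)\setminus N_\mathrm{ALG}$, that its $v$-avoiding prefix lives in $G\setminus\rho(v)$, and that the resulting terms can be charged to $h_{G\setminus\rho(v)}(w)$ plus the count $|\rho(v)|-b$ of surviving neighbors. The only (immaterial) difference is bookkeeping: the paper charges each $u$ to a single $w(u)$ via the distance equality $d_{G\setminus F_\mathrm{ALG}}(u,w(u))=d_{G\setminus\rho(v)}(u,w(u))$, whereas you bound each term by the sum over all surviving $w$ and then exchange the order of summation, which yields the same final bound.
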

\begin{proof}
On digraph $G\setminus F_\mathrm{ALG}$, any $u\in V\setminus \{v\}$ satisfies either 
(i) there exists no (shortest) path from $u$ to $v$ or 
(ii) there exists $w(u)\in N_\mathrm{in}(v)\setminus N_\mathrm{ALG}$ that is contained in a shortest path from $u$ to $v$, 
i.e., $d_{G\setminus F_\mathrm{ALG}}(u,v)=d_{G\setminus F_\mathrm{ALG}}(u,w(u))+1$. 
Let $V'\subseteq V\setminus \{v\}$ be the subset of vertices that satisfy the condition (ii). 
Then we have
\begin{align}\label{eq:fast_transform}
f_{(G,v)}(F_\mathrm{ALG})&=h_{G\setminus F_\mathrm{ALG}}(v)
=\sum_{u\in V\setminus \{v\}}\frac{1}{d_{G\setminus F_\mathrm{ALG}}(u,v)}\nonumber\\
&=\sum_{\substack{u\in V'\setminus \{v\}}}\frac{1}{d_{G\setminus F_\mathrm{ALG}}(u,w(u))+1}. 
\end{align}
We see that the shortest path corresponding to $d_{G\setminus F_\mathrm{ALG}}(u,w(u))$ does not contain $v$ (and thus any edge in $\rho(v)\setminus F_\mathrm{ALG}$). 
Otherwise there would exist $w'(u)\in N_\mathrm{in}(v)\setminus N_\mathrm{ALG}$ satisfying that $d_{G\setminus F_\mathrm{ALG}}(u,w'(u))<d_{G\setminus F_\mathrm{ALG}}(u,w(u))$, 
which contradicts the fact that $w(u)$ is contained in a shortest path from $u$ to $v$ on $G\setminus F_\mathrm{ALG}$. 
Hence, we have 
\begin{align*}
d_{G\setminus F_\mathrm{ALG}}(u,w(u))=d_{G\setminus \rho(v)}(u,w(u)). 
\end{align*}
Combining this with the equality~\eqref{eq:fast_transform}, we can conclude the proof: 
\begin{align*}
&f_{(G,v)}(F_\mathrm{ALG}) =\sum_{\substack{u\in V'\setminus \{v\}}}\frac{1}{d_{G\setminus \rho(v)}(u,w(u))+1}\\
&=(|\rho(v)|-b)+\sum_{\substack{u\in V'\setminus \{v\}\setminus (N_\mathrm{in}(v)\setminus N_\mathrm{ALG})}}\frac{1}{d_{G\setminus \rho(v)}(u,w(u))+1}\\
&\leq (|\rho(v)|-b)+\sum_{\substack{u\in V'\setminus \{v\}\setminus (N_\mathrm{in}(v)\setminus N_\mathrm{ALG})}}\frac{1}{d_{G\setminus \rho(v)}(u,w(u))}\\
&\leq (|\rho(v)|-b)+\sum_{w\in N_\mathrm{in}(v)\setminus N_\mathrm{ALG}}h_{G\setminus \rho(v)}(w),
\end{align*}
where the last inequality holds by the fact that 
any term $\frac{1}{d_{G\setminus \rho(v)}(u,w(u))}$ in the summation of the left-hand-side appears 
as a term in $h_{G\setminus \rho(v)}(w)$ for appropriate $w=w(u)$ in the right-hand-side. 
\end{proof}

We are now ready to prove our main theorem: 
\begin{theorem}\label{thm:fast_primal}
Algorithm~\ref{alg:fast} is a $2(|\rho(v)|-b)$-approximation algorithm for Problem~\ref{prob:centrality_min}. 
\end{theorem}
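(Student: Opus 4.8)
The plan is to combine the lower bound of Lemma~\ref{lem:fast_LB} with the upper bound of Lemma~\ref{lem:fast_UB}, linking them through the single quantity $M \coloneqq \max_{w\in N_\mathrm{in}(v)\setminus N_\mathrm{ALG}} h_{G\setminus\rho(v)}(w)$. First I would dispose of the trivial case $b\geq |\rho(v)|$: then Algorithm~\ref{alg:fast} removes all incoming edges of $v$, so $v$ becomes unreachable and $f_{(G,v)}(F_\mathrm{ALG})=0$, making the claimed inequality hold vacuously. So I assume $b<|\rho(v)|$, whence $N_\mathrm{in}(v)\setminus N_\mathrm{ALG}\neq\emptyset$ and $|N^*|\leq b<|\rho(v)|$, and the maxima appearing in both lemmas are taken over nonempty sets.

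Second, I would bound the right-hand side of Lemma~\ref{lem:fast_UB}. Since $N_\mathrm{ALG}=\{w_1,\dots,w_b\}$ consists precisely of the $b$ vertices of $N_\mathrm{in}(v)$ with the largest values $h_{G\setminus\rho(v)}(\cdot)$, every $w\in N_\mathrm{in}(v)\setminus N_\mathrm{ALG}$ satisfies $h_{G\setminus\rho(v)}(w)\leq M$, and there are exactly $|\rho(v)|-b$ such vertices. Therefore
\begin{align*}
f_{(G,v)}(F_\mathrm{ALG})\leq (|\rho(v)|-b)+(|\rho(v)|-b)\,M=(|\rho(v)|-b)(M+1).
\end{align*}

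Third --- and this is the crux --- I would relate $M$ to the quantity appearing in Lemma~\ref{lem:fast_LB}, namely $\max_{w\in N_\mathrm{in}(v)\setminus N^*}h_{G\setminus\rho(v)}(w)$, via a pigeonhole argument. Consider the top $b+1$ vertices $w_1,\dots,w_{b+1}$ in the sorted order; each of them satisfies $h_{G\setminus\rho(v)}(\cdot)\geq h_{G\setminus\rho(v)}(w_{b+1})=M$. Since $|N^*|\leq b$, at least one of these $b+1$ vertices, say $w'$, lies outside $N^*$, so $w'\in N_\mathrm{in}(v)\setminus N^*$ with $h_{G\setminus\rho(v)}(w')\geq M$. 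Hence $\max_{w\in N_\mathrm{in}(v)\setminus N^*}h_{G\setminus\rho(v)}(w)\geq M$, and feeding this into Lemma~\ref{lem:fast_LB} gives $f_{(G,v)}(F^*)\geq \frac{1}{2}(M+1)$, i.e. $M+1\leq 2\,f_{(G,v)}(F^*)$.

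Finally, substituting this into the bound from the second step yields
\begin{align*}
f_{(G,v)}(F_\mathrm{ALG})\leq (|\rho(v)|-b)(M+1)\leq 2(|\rho(v)|-b)\,f_{(G,v)}(F^*),
\end{align*}
which is exactly the claimed approximation ratio. The only genuinely nontrivial step is the third one: the greedy choice of Algorithm~\ref{alg:fast} is made with respect to the harmonic scores on $G\setminus\rho(v)$, while the optimal set $N^*$ need not coincide with the top-$b$ set, so I expect the pigeonhole comparison between $M$ and the optimum's complement-maximum to be the place where the argument must be justified with care; the remaining manipulations are routine.
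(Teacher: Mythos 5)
Your proof is correct and follows essentially the same route as the paper's: apply Lemma~\ref{lem:fast_UB}, bound the sum by $(|\rho(v)|-b)$ times the complement-maximum, compare that maximum against the optimum's complement-maximum, and close with Lemma~\ref{lem:fast_LB}. Your pigeonhole argument in the third step is just an explicit justification of the inequality the paper asserts with the phrase ``by the behavior of Algorithm~\ref{alg:fast}''.
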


\begin{proof}
Here we use the notation that appeared in Lemmas~\ref{lem:fast_LB} and~\ref{lem:fast_UB}. 
By the behavior of Algorithm~\ref{alg:fast}, we have 
\begin{align*}
\max_{w\in N_\mathrm{in}(v)\setminus N_\mathrm{ALG}}h_{G\setminus \rho(v)}(w)\leq \max_{w\in N_\mathrm{in}(v)\setminus N^*}h_{G\setminus \rho(v)}(w). 
\end{align*}
Using Lemmas~\ref{lem:fast_LB} and~\ref{lem:fast_UB} together with this inequality, we have 
\begin{align*}
&f_{(G,v)}(F_\mathrm{ALG})\leq (|\rho(v)|-b)+\sum_{w\in N_\mathrm{in}(v)\setminus N_\mathrm{ALG}}h_{G\setminus \rho(v)}(w)\\
&\leq (|\rho(v)|-b)\left(1+\max_{w\in N_\mathrm{in}(v)\setminus N_\mathrm{ALG}} h_{G\setminus \rho(v)}(w)\right)\\
&\leq (|\rho(v)|-b)\left(1+\max_{w\in N_\mathrm{in}(v)\setminus N^*} h_{G\setminus \rho(v)}(w)\right)\\
&\leq (|\rho(v)|-b)\left(1+2f_{(G,v)}(F^*)-1\right)
= 2(|\rho(v)|-b)f_{(G,v)}(F^*), 
\end{align*}
which completes the proof. 
\end{proof}

Based on the theorem, we obtain the desired approximation ratio: 
\begin{corollary}
Algorithm~\ref{alg:fast} is a $\sqrt{2h_G(v)}$-approximation algorithm for Problem~\ref{prob:centrality_min}. 
\end{corollary}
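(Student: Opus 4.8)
The plan is to upgrade the bound of Theorem~\ref{thm:fast_primal}, which reads $2(|\rho(v)|-b)$ and can be as large as $\Theta(|V|)$, into the claimed $\sqrt{2h_G(v)}$ guarantee by balancing it against a second, complementary estimate through the elementary inequality $\min(a,b)\le\sqrt{ab}$ (valid for $a,b\ge 0$).

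First I would record two easy but crucial facts. (i) A lower bound on the optimum: since $N^*=\{w\mid (w,v)\in F^*\}\subseteq N_\mathrm{in}(v)$ and $|N^*|=|F^*|\le b$, at least $|\rho(v)|-b$ incoming neighbors of $v$ still point to $v$ in $G\setminus F^*$, so each contributes a term $1/d_{G\setminus F^*}(w,v)=1/1=1$ to $h_{G\setminus F^*}(v)$; hence $f_{(G,v)}(F^*)\ge |\rho(v)|-b$. (ii) A trivial upper bound on the algorithm's value: removing edges can only increase distances, so $f_{(G,v)}(F_\mathrm{ALG})=h_{G\setminus F_\mathrm{ALG}}(v)\le h_G(v)$.

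Combining Theorem~\ref{thm:fast_primal} with fact (i), I would write $f_{(G,v)}(F_\mathrm{ALG})\le 2(|\rho(v)|-b)\,f_{(G,v)}(F^*)\le 2\,f_{(G,v)}(F^*)^2$, which, after dividing by $f_{(G,v)}(F^*)$, bounds the approximation ratio by $2\,f_{(G,v)}(F^*)$. On the other hand, fact (ii) bounds the same ratio by $h_G(v)/f_{(G,v)}(F^*)$. The first estimate is strong when the optimum is small, the second when it is large, so applying $\min(a,b)\le\sqrt{ab}$ with $a=2\,f_{(G,v)}(F^*)$ and $b=h_G(v)/f_{(G,v)}(F^*)$ gives the ratio $\le\sqrt{2h_G(v)}$, since the two factors multiply to exactly $2h_G(v)$.

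The only real obstacle is spotting which two bounds to trade off: Theorem~\ref{thm:fast_primal} on its own degrades (via $|\rho(v)|-b\le f_{(G,v)}(F^*)$) only to $2\,f_{(G,v)}(F^*)\le 2h_G(v)$, a factor $\sqrt{2h_G(v)}$ short of the target. The key realization is that this edge-linear guarantee must be paired with the crude monotonicity bound $f_{(G,v)}(F_\mathrm{ALG})\le h_G(v)$, and that it is their \emph{geometric} mean, rather than their sum or either one alone, that collapses to $\sqrt{2h_G(v)}$. Once the trade-off structure is identified, the remaining manipulations are one-line computations, so I expect no further technical difficulty.
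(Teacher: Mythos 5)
Your proof is correct and takes essentially the same route as the paper: both combine the $2(|\rho(v)|-b)$ guarantee of Theorem~\ref{thm:fast_primal} with the trivial bound $f_{(G,v)}(F_\mathrm{ALG})\le h_G(v)$ and the lower bound $f_{(G,v)}(F^*)\ge |\rho(v)|-b$, concluding via $\min(a,b)\le\sqrt{ab}$ (the paper balances $2(|\rho(v)|-b)$ against $h_G(v)/(|\rho(v)|-b)$ instead of your $2f_{(G,v)}(F^*)$ against $h_G(v)/f_{(G,v)}(F^*)$, a mere reparametrization). The only detail to add is the degenerate case $b=|\rho(v)|$, where $f_{(G,v)}(F^*)=0$ and your division by it is undefined; as the paper notes in one sentence, Algorithm~\ref{alg:fast} then outputs the exactly optimal solution $\rho(v)$, so the claim holds trivially and one may assume $b<|\rho(v)|$ throughout.
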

\begin{proof}
For any instance that satisfies $b=|\rho(v)|$, Algorithm~\ref{alg:fast} outputs the trivial optimal solution (i.e., $\rho(v)$). 
Therefore, in what follows, we focus only on the instances with $b<|\rho(v)|$. 
Obviously the output of any algorithm for Problem~\ref{prob:centrality_min} has an objective value at most $h_G(v)$. 
On the other hand, the optimal value is at least $|\rho(v)|-b$ 
because in the resulting digraph, there are still $|\rho(v)|-b$ incoming neighbors of $v$, each of which contributes exactly $1$ to the objective value. 
Therefore, any algorithm (including Algorithm~\ref{alg:fast}) for Problem~\ref{prob:centrality_min} 
has an approximation ratio of $\frac{h_G(v)}{|\rho(v)|-b}$. 
By combining this with Theorem~\ref{thm:fast_primal}, the approximation ratio of Algorithm~\ref{alg:fast} can be improved to
$\min\left\{2(|\rho(v)|-b),\ \frac{h_G(v)}{|\rho(v)|-b}\right\}
\leq \sqrt{2h_G(v)}$. 
\end{proof}

We remark that as $\sqrt{2h_G(v)}\leq \sqrt{2(|V|-1)}=O(\sqrt{|V|})=o(|V|)$, 
the approximation ratio is unachievable by the greedy algorithm. 
Finally, we conclude this section by showing that the analysis of the approximation ratio is tight up to a constant factor. 
The proof is available in Appendix~\ref{appendix:proof_tight}. 
\begin{theorem}\label{thm:fast_limit}
Algorithm~\ref{alg:fast} has no approximation ratio of $o\left(\sqrt{h_G(v)}\right)$. 
\end{theorem}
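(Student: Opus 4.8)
The plan is to prove the lower bound by exhibiting an explicit family of (acyclic) instances on which the ratio between the value returned by Algorithm~\ref{alg:fast} and the optimal value grows like $\sqrt{h_G(v)}$. Concretely, for each $b\in\mathbb{Z}_{>0}$ I would build an instance with target $v$ and exactly $2b$ incoming neighbors, split into two groups of size $b$. The first group consists of $b$ \emph{disjoint} neighbors $D_1,\dots,D_b$, where each $D_i$ is fed by a private set of $b$ source vertices whose only out-edge points to $D_i$, so the $b$ clusters are pairwise vertex-disjoint. The second group consists of $b$ \emph{redundant} neighbors $R_1,\dots,R_b$, all fed by a single shared pool of $b+1$ sources, each of which points to every $R_j$. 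Every $D_i$ and every $R_j$ has a single out-edge, to $v$, and the budget is set to $b$.

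I would then carry out the argument in four short steps. First, I compute the scores on $G\setminus\rho(v)$ that drive the sorting in Algorithm~\ref{alg:fast}: each $D_i$ is reached only by its $b$ private sources at distance $1$, giving $h_{G\setminus\rho(v)}(D_i)=b$, whereas each $R_j$ is reached by all $b+1$ shared sources at distance $1$, giving $h_{G\setminus\rho(v)}(R_j)=b+1$. Since $b+1>b$, the algorithm deletes precisely the $b$ edges $(R_j,v)$ and keeps all $(D_i,v)$. Second, I evaluate the output: with the $D_i$ retained, each of the $b^2$ private sources still reaches $v$ at distance $2$, so $f_{(G,v)}(F_\mathrm{ALG})=\frac{b^2}{2}+b$, while the shared sources become disconnected from $v$. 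Third, I bound the optimum from above by the feasible solution that instead deletes all $b$ edges $(D_i,v)$: this disconnects every private source, leaving only the $b$ neighbors $R_j$ at distance $1$ and their $b+1$ shared sources at distance $2$, whence $f_{(G,v)}(F^*)\le b+\frac{b+1}{2}=O(b)$. Fourth, since $h_G(v)=\frac{b^2}{2}+\Theta(b)=\Theta(b^2)$ we have $\sqrt{h_G(v)}=\Theta(b)$, so the ratio is $\frac{b^2/2+b}{b+(b+1)/2}=\Theta(b)=\Omega(\sqrt{h_G(v)})$, ruling out any $o(\sqrt{h_G(v)})$ guarantee.

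The crux, and the part needing the most care, is engineering a \emph{redundancy gap} that the score-based proxy cannot see. The quantity $h_{G\setminus\rho(v)}(w)$ measures how reachable a single neighbor $w$ is, but it is blind to whether the sources feeding different neighbors overlap. The redundant $R_j$ thus look the most attractive to delete (highest scores) even though, collectively, they add only $\Theta(b)$ to $h_G(v)$ because their feeding pool is shared and counted once; dually, the disjoint $D_i$ look individually unremarkable yet are collectively heavy, contributing $\Theta(b^2)$ because their clusters do not overlap, and the algorithm keeps exactly these. The balancing is constrained by the elementary bound $f_{(G,v)}(F^*)\ge|\rho(v)|-b=b$ (each surviving neighbor contributes at least $1$, as already observed in the corollary's proof): to push the ratio up to $\sqrt{h_G(v)}$ one needs $h_G(v)=\Omega(b^2)$, which is precisely why each disjoint cluster must have size $\Theta(b)$, while the shared pool is kept at size just above $b$ so that $h_{G\setminus\rho(v)}(R_j)>h_{G\setminus\rho(v)}(D_i)$ yet the optimum stays $\Theta(b)$.
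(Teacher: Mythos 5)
Your proposal is correct and follows essentially the same construction as the paper's proof: a family of instances pairing individually low-scoring but pairwise-disjoint neighbor clusters (collectively contributing $\Theta(b^2)$) against individually higher-scoring neighbors fed by one shared source pool (collectively contributing only $\Theta(b)$), so the score-based sorting in Algorithm~\ref{alg:fast} deletes exactly the wrong group. The only differences are immaterial parameter choices (cluster size $b$ and pool size $b+1$ versus the paper's $k-1$ and $k$) and that you compute $h_G(v)=\Theta(b^2)$ directly, whereas the paper bounds the ratio by $\Omega(\sqrt{|V|})$ and then invokes $h_G(v)\leq |V|-1$.
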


\section{Bicriteria Approximation Algorithm}\label{sec:bicriteria}

In this section, we present a polynomial-time $(\frac{1}{\alpha},(\frac{1}{1-\alpha},\epsilon))$-bicriteria approximation algorithm ($\alpha\in (0,1)$ and $\epsilon >0$) for Problem~\ref{prob:centrality_min}.
Our algorithm first solves a continuous relaxation of the problem
and then applies a simple probabilistic procedure to the fractional solution to obtain the output.

\subsection{Continuous relaxation}
To obtain a continuous relaxation of Problem~\ref{prob:centrality_min},
we consider the well-known extension of set functions, called the Lov\'asz extension~\cite{Lovasz83}.
For our objective function $f_{(G,v)}$, the Lov\'asz extension $\widehat{f}_{(G,v)}\colon [0,1]^{\rho(v)}\rightarrow \mathbb{R}$ is defined in the following way:
Let $\rho(v)=\{e_1,\dots, e_{|\rho(v)|}\}$.
For $\bm{x}\in [0,1]^{\rho(v)}$, we relabel the elements of $\rho(v)$ so that $x_{e_1}\geq x_{e_2}\geq \cdots \geq x_{e_{|\rho(v)|}}$,
and construct a sequence of subsets $\emptyset=X_0\subset X_1\subset\cdots \subset X_{|\rho(v)|}=\rho(v)$,
where $X_i=\{e_1,\dots, e_i\}$ for $i=1,\dots,|\rho(v)|$.
Based on these, we define the value of $\widehat{f}_{(G,v)}(\bm{x})$ as follows:
\begin{align*}
&\widehat{f}_{(G,v)}(\bm{x}) = (1-x_{e_1})f_{(G,v)}(\emptyset)\\
&\qquad +\sum_{i=1}^{|\rho(v)|-1}(x_{e_i}- x_{e_{i+1}})f_{(G,v)}(X_i)+x_{e_{|\rho(v)|}}f_{(G,v)}(\rho(v)).
\end{align*}
Observe that for any $F\subseteq \rho(v)$, it holds that $\widehat{f}_{(G,v)}(\bm{1}_F)=f_{(G,v)}(F)$,
where $\bm{1}_F$ is an indicator vector of $F$, taking 1 if $e\in F$ and $0$ otherwise.
Therefore, $\widehat{f}_{(G,v)}$ is indeed an extension of $f_{(G,v)}$.

The Lov\'asz extension can be defined on any (not necessarily submodular) set function.
The Lov\'asz extension is always continuous but not necessarily differentiable.
An important fact is that the Lov\'asz extension is convex if and only if the original set function is submodular~\cite{Lovasz83}.
Therefore, by Theorem~\ref{thm:submodular}, the Lov\'asz extension $\widehat{f}_{(G,v)}$ of $f_{(G,v)}$ is convex.

Using $\widehat{f}_{(G,v)}$, we introduce our continuous relaxation as follows:
\begin{alignat*}{3}
&\textsf{Relaxation:} &\ \  &\text{minimize}   &\ \ &\widehat{f}_{(G,v)}(\bm{x})\\
&           &     &\text{subject to} &    &\|\bm{x}\|_1 \leq b, \ \bm{x}\in [0,1]^{\rho(v)}.
\end{alignat*}
For convenience, we denote by $C$ the feasible region of the problem, i.e., 
$C\coloneqq \left\{\bm{x}\in \mathbb{R}^{\rho(v)} : \|\bm{x}\|_1 \leq b\ \text{ and }\ \bm{x}\in [0,1]^{\rho(v)}\right\}$.
From the above, we see that \textsf{Relaxation} is a non-smooth convex programming problem.
We will present an algorithm for \textsf{Relaxation} and its convergence result in Section~\ref{sec:algorithm_cont}.
In the remainder of this section, we assume that for $\epsilon'=(1-\alpha)\epsilon$,
we can compute, in polynomial time, an $\epsilon'$-additive approximate solution for \textsf{Relaxation},
i.e., a feasible solution for \textsf{Relaxation} that has an objective value at most the optimal value plus $\epsilon'$.

\subsection{Algorithm}
Let $\bm{x}^*\in [0,1]^{\rho(v)}$ be an $\epsilon'$-additive approximate solution for \textsf{Relaxation}, where $\epsilon'=(1-\alpha)\epsilon$.
Then our algorithm picks $p\in [\alpha, 1]$ uniformly at random and just returns $\{e\in \rho(v)\mid x^*_e\geq p\}$.
For reference, the entire procedure is summarized in Algorithm~\ref{alg:biapprox}.

\begin{algorithm}[t]
\caption{$(\frac{1}{\alpha},(\frac{1}{1-\alpha},\epsilon))$-bicriteria approximation algorithm for Problem~\ref{prob:centrality_min}}
\label{alg:biapprox}
\SetKwInOut{Input}{Input}
\SetKwInOut{Output}{Output}
\Input{\ $G=(V,A)$, $v\in V$, and $b\in \mathbb{Z}_{>0}$}
\Output{\ $F\subseteq \rho(v)$}
$\epsilon'\leftarrow (1-\alpha)\epsilon$\;
Solve \textsf{Relaxation} (using Algorithm~\ref{alg:psm} in Section~\ref{sec:algorithm_cont}) and obtain its $\epsilon'$-additive approximate solution $\bm{x}^*\in [0,1]^{\rho(v)}$\;
Pick $p\in [\alpha, 1]$ uniformly at random\;
\Return $\{e\in \rho(v)\mid x^*_e\geq p\}$\;
\end{algorithm}

\subsection{Analysis}
The following theorem gives the bicriteria approximation ratio of Algorithm~\ref{alg:biapprox}: 

\begin{theorem}\label{thm:biapprox}
For any $\alpha \in (0,1)$ and $\epsilon >0$,
Algorithm~\ref{alg:biapprox} is a polynomial-time randomized $(\frac{1}{\alpha},(\frac{1}{1-\alpha},\epsilon))$-bicriteria approximation algorithm, in expectation, 
for Problem~\ref{prob:centrality_min}. 
\end{theorem}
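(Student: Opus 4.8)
The plan is to analyze the two criteria of Algorithm~\ref{alg:biapprox} separately, relying on the standard threshold-rounding behaviour of the Lov\'asz extension together with the nonnegativity of the harmonic centrality. The starting point is the identity that, for the $\epsilon'$-additive approximate solution $\bm{x}^*\in[0,1]^{\rho(v)}$ and $F_p\coloneqq\{e\in\rho(v)\mid x^*_e\geq p\}$, choosing $p$ uniformly in $[0,1]$ gives $\int_0^1 f_{(G,v)}(F_p)\,\mathrm{d}p=\widehat{f}_{(G,v)}(\bm{x}^*)$. I would establish this directly from the definition of $\widehat{f}_{(G,v)}$: after relabelling so that $x^*_{e_1}\geq\cdots\geq x^*_{e_{|\rho(v)|}}$, the set $F_p$ equals $X_i$ exactly for $p\in(x^*_{e_{i+1}},x^*_{e_i}]$, an interval of length $x^*_{e_i}-x^*_{e_{i+1}}$, so the integral telescopes into the defining sum of the Lov\'asz extension.

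For the size criterion I would argue deterministically. For every $p\in[\alpha,1]$ we have $F_p\subseteq\{e\in\rho(v)\mid x^*_e\geq\alpha\}$, and since $\bm{x}^*$ is feasible for \textsf{Relaxation}, a Markov-type inequality gives $\alpha\cdot|\{e\mid x^*_e\geq\alpha\}|\leq\|\bm{x}^*\|_1\leq b$. Hence $|F_p|\leq b/\alpha$ for every realization of $p$, which already yields the $\frac{1}{\alpha}$ factor (in fact surely, not only in expectation).

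For the objective criterion I would split the unit-interval integral from the threshold identity at the point $\alpha$:
\[
\widehat{f}_{(G,v)}(\bm{x}^*)=\int_0^\alpha f_{(G,v)}(F_p)\,\mathrm{d}p+(1-\alpha)\,\mathbb{E}_{p\sim U[\alpha,1]}\!\left[f_{(G,v)}(F_p)\right].
\]
Because $f_{(G,v)}(F)=h_{G\setminus F}(v)\geq 0$ for every $F$, the first integral is nonnegative and can be dropped, leaving $\mathbb{E}\!\left[f_{(G,v)}(F_p)\right]\leq\frac{1}{1-\alpha}\widehat{f}_{(G,v)}(\bm{x}^*)$. It then remains to bound $\widehat{f}_{(G,v)}(\bm{x}^*)$ by the optimum of Problem~\ref{prob:centrality_min}. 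Writing $F^*$ for an optimal solution, the indicator $\bm{1}_{F^*}$ is feasible for \textsf{Relaxation} since $|F^*|\leq b$, and the extension property gives $\widehat{f}_{(G,v)}(\bm{1}_{F^*})=f_{(G,v)}(F^*)$; hence the optimal value of \textsf{Relaxation} is at most $f_{(G,v)}(F^*)$, and the $\epsilon'$-additive guarantee yields $\widehat{f}_{(G,v)}(\bm{x}^*)\leq f_{(G,v)}(F^*)+\epsilon'$. Substituting $\epsilon'=(1-\alpha)\epsilon$ turns the bound into $\frac{1}{1-\alpha}f_{(G,v)}(F^*)+\epsilon$, completing the $(\frac{1}{1-\alpha},\epsilon)$ part.

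The step I expect to carry the conceptual weight is the restriction of the random threshold to $[\alpha,1]$: this single truncation simultaneously forces the size down to $b/\alpha$ (no coordinate below $\alpha$ is ever selected) and inflates the objective by exactly $\frac{1}{1-\alpha}$ (we integrate $f_{(G,v)}$ only over a window of length $1-\alpha$ instead of the full unit interval), and it is the interplay of these two effects that produces the stated bicriteria ratio. The threshold identity itself is routine, but verifying it carefully from the paper's explicit definition of $\widehat{f}_{(G,v)}$ is the technical linchpin that makes both bounds fall out.
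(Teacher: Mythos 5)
Your proposal is correct and follows essentially the same route as the paper's proof: threshold rounding over $[\alpha,1]$, bounding the truncated expectation by $\frac{1}{1-\alpha}\widehat{f}_{(G,v)}(\bm{x}^*)$ via nonnegativity of $f_{(G,v)}$ (your integral split at $\alpha$ is just the continuous restatement of the paper's discrete sum over thresholds), and then invoking feasibility of $\bm{1}_{F^*}$ together with the $\epsilon'$-additive guarantee with $\epsilon'=(1-\alpha)\epsilon$. The only (welcome) deviation is your size bound, which via the Markov-type inequality $\alpha\,|\{e\mid x^*_e\geq\alpha\}|\leq\|\bm{x}^*\|_1\leq b$ holds surely rather than merely in expectation as in the paper.
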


\begin{proof}
Let $F\subseteq \rho(v)$ be the output of Algorithm~\ref{alg:biapprox}.
The approximation ratio with respect to the size of $F$ can be evaluated as follows:
\begin{align*}
\text{E}[|F|]&=\frac{1}{\alpha}\cdot \text{E}\left[\sum_{e\in F}\alpha\right]
\leq \frac{1}{\alpha}\cdot \text{E}\left[\sum_{e\in F}x^*_e\right]\leq \frac{1}{\alpha} \sum_{e\in \rho(v)}x^*_e\leq \frac{b}{\alpha},
\end{align*}
where the first inequality follows from $\alpha \leq p \leq x^*_e$ for any $e\in F$,
the second inequality follows from the nonnegativity of $\bm{x}^*$,
and the third inequality follows from the first constraint in \textsf{Relaxation}.

Next we analyze the approximation ratio with respect to the quality of $F$.
Let $F^*$ be an optimal solution to Problem~\ref{prob:centrality_min}.
As \textsf{Relaxation} is indeed a relaxation of Problem~\ref{prob:centrality_min} and $\bm{x}^*$ is its $\epsilon'$-additive approximate solution,
we have $\widehat{f}_{(G,v)}(\bm{x}^*)\leq f_{(G,v)}(F^*)+\epsilon'$.
For convenience, we define $x^*_{e_0}=1$ for an imaginary element $e_0$.
Let $\ell$ be the maximum number that satisfies $x^*_{e_\ell}\geq \alpha$.
Then we have
\begin{align*}
&\text{E}[f_{(G,v)}(F)]=\frac{\sum_{i=0}^{\ell-1}(x^*_{e_i}-x^*_{e_{i+1}})f_{(G,v)}(X_i)+ (x^*_{e_{\ell}}-\alpha)f_{(G,v)}(X_\ell)}{1-\alpha}\\
&\leq \frac{\sum_{i=0}^{|\rho(v)|-1}(x^*_{e_i}- x^*_{e_{i+1}})f_{(G,v)}(X_i)+x^*_{e_{|\rho(v)|}}f_{(G,v)}(\rho(v))}{1-\alpha}\\
&=\frac{\widehat{f}_{(G,v)}(\bm{x}^*)}{1-\alpha}\leq \frac{f_{(G,v)}(F^*)+\epsilon'}{1-\alpha}=\frac{f_{(G,v)}(F^*)}{1-\alpha}+\epsilon,
\end{align*}
where the first equality follows from the random choice of $p$ and the first inequality follows from the monotonicity of elements in $\bm{x}^*$ and the nonnegativity of $f_{(G,v)}$.
Therefore, we have the theorem.
\end{proof}

\section{Solving \textsf{Relaxation}}\label{sec:algorithm_cont}
In this section, we present our algorithm for solving \textsf{Relaxation}. 

\subsection{Algorithm}
Specifically, we design a projected subgradient method for \textsf{Relaxation}. 
The algorithm is an iterative method, where each iteration consists of two parts, i.e., the subgradient computation part and the projection computation part. 
The pseudo-code is given in Algorithm~\ref{alg:psm}. 
All the details will be given later. 
\begin{algorithm}[t]
\caption{Projected subgradient method for \textsf{Relaxation}}
\label{alg:psm}
\SetKwInOut{Input}{Input}
\SetKwInOut{Output}{Output}
\Input{\ $\bm{x}_0\in C$ and some stopping condition}
\Output{\ $\bm{x}\in C$}
$t\leftarrow 0$\;
\While{the stopping condition is not satisfied}{
  Pick a stepsize $\eta_t >0$ and a subgradient $\widehat{f}'_{(G,v)}(\bm{x}_{t})$ of $\widehat{f}_{(G,v)}$ at $\bm{x}_t$\;
  $\bm{x}_{t+1}\leftarrow \proj_C\left(\bm{x}_{t} - \eta_t \cdot \widehat{f}'(\bm{x}_{t})\right)$ and $t\leftarrow t+1$\;
}
\Return{$\bm{x}_t$}\;
\end{algorithm}
The sequence generated by the algorithm is $\{\bm{x}_t\}_{t\geq 0}$, while the sequence of function values generated by the algorithm is $\{\widehat{f}_{(G,v)}(\bm{x}^t)\}_{t\geq 0}$. 
As the sequence of function values is not necessarily monotone, we are also interested in the sequence of best-achieved function values at or before $\ell$-th iteration, 
which is defined as 
$\widehat{f}_\text{best}^{(\ell)}=\min_{t=0,1,\dots, \ell} \widehat{f}_{(G,v)}(\bm{x}_t)$. 

\noindent \textbf{Subgradient computation.}
From the definition of $\widehat{f}_{(G,v)}$, a subgradient $\widehat{f}'_{(G,v)}$ at $\bm{x}_t\in C$ is given by 
\begin{align}\label{eq:subgradient}
\widehat{f}'_{(G,v)}(\bm{x}_t)=\sum_{i=1}^{|\rho(v)|}\left(f_{(G,v)}(X_i)-f_{(G,v)}(X_{i-1})\right)\bm{u}_{e_i}, 
\end{align}
where $\bm{u}_{e_i}$ is the $|\rho(v)|$-dimensional vector that takes $1$ in the element corresponding to $e_i$ and $0$ elsewhere. 
To compute the subgradient $\widehat{f}'_{(G,v)}(\bm{x}_t)$,
we need to sort the entries of $\bm{x}_t$ and compute $f_{(G,v)}(X_i)$ for all $i=0,1,\dots, |\rho(v)|$,
which takes $O(|\rho(v)|(|V|+|A|))$ time. 

\smallskip
\noindent \textbf{Projection computation.}
For a given $\bm{x}\in \mathbb{R}^{\rho(v)}$, it is not trivial how to compute the projection of $\bm{x}$ onto $C$ 
because $C$ is the intersection of the two sets 
$\{\bm{x}\in \mathbb{R}^{\rho(v)} \mid \|\bm{x}\|_1 \leq b\}$ and $\{\bm{x}\in \mathbb{R}^{\rho(v)} \mid \bm{x}\in [0,1]^{\rho(v)}\}$. 
For simplicity, define $\mathrm{Box}[\bm{0},\bm{1}] = \{\bm{x}\in \mathbb{R}^{\rho(v)} \mid \bm{x}\in [0,1]^{\rho(v)}\}$. 
Let $\proj_{\mathrm{Box}[\bm{0},\bm{1}]}(\bm{x})$ be the projection of $\bm{x}$ onto $\mathrm{Box}[\bm{0},\bm{1}]$. 
Then by Lemma~6.26 in Beck~\cite{Beck17}, we obtain 
$\proj_{\mathrm{Box}[\bm{0},\bm{1}]}(\bm{x}) = (\min\{\max\{x_e,0\},1\})_{e\in \rho(v)}$. 
Using this projection, we can give the projection of $\bm{x}$ onto $C$ as follows: 
\begin{fact}[A special case of Example~6.32 in Beck~\cite{Beck17}]
Let $\proj_C(\bm{x})$ be the projection of $\bm{x}\in \mathbb{R}^{\rho(v)}$ onto $C$. Then we have 
\begin{align*}
\proj_C(\bm{x})=
\begin{cases}
\proj_{\mathrm{Box}[\bm{0},\bm{1}]}(\bm{x}) &\mathrm{if}\  \|\proj_{\mathrm{Box}[\bm{0},\bm{1}]}(\bm{x})\|_1\leq b,\\
\proj_{\mathrm{Box}[\bm{0},\bm{1}]}(\bm{x}-\lambda^*\bm{1}) &\mathrm{otherwise}, 
\end{cases}
\end{align*}
where $\lambda^*$ is any positive root of the nonincreasing function $\varphi(\lambda)=\|\proj_{\mathrm{Box}[\bm{0},\bm{1}]}(\bm{x}-\lambda\bm{1})\|_1-b$. 
\end{fact}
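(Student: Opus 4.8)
The plan is to characterize $\proj_C(\bm{x})$ as the unique minimizer of the convex quadratic program
\begin{align*}
\text{minimize}\quad \tfrac{1}{2}\|\bm{y}-\bm{x}\|_2^2 \qquad \text{subject to}\quad \textstyle\sum_{e\in\rho(v)} y_e \leq b,\quad \bm{y}\in \mathrm{Box}[\bm{0},\bm{1}],
\end{align*}
and to identify it through its KKT conditions; since all constraints are affine, these conditions are both necessary and sufficient for optimality of this strictly convex problem.

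First I would dispose of the easy branch. Because $C\subseteq \mathrm{Box}[\bm{0},\bm{1}]$, the point $\proj_{\mathrm{Box}[\bm{0},\bm{1}]}(\bm{x})$ is the closest point of the box to $\bm{x}$; whenever it satisfies $\|\proj_{\mathrm{Box}[\bm{0},\bm{1}]}(\bm{x})\|_1\leq b$ it already lies in $C$, and being no farther from $\bm{x}$ than any point of the smaller set $C$, it must coincide with $\proj_C(\bm{x})$. This yields the first case.

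For the second case I would attach a single multiplier $\lambda\geq 0$ to the constraint $\sum_e y_e\leq b$ while keeping the box constraint explicit, forming $\tfrac{1}{2}\|\bm{y}-\bm{x}\|_2^2+\lambda(\sum_e y_e-b)$. The crucial observation is separability: minimizing this over $\bm{y}\in\mathrm{Box}[\bm{0},\bm{1}]$ decouples across coordinates, and completing the square per coordinate shows the minimizer is $\proj_{[0,1]}(x_e-\lambda)$, i.e.\ exactly $\proj_{\mathrm{Box}[\bm{0},\bm{1}]}(\bm{x}-\lambda\bm{1})$ by the explicit box-projection formula recalled above. Complementary slackness then leaves only two possibilities: $\lambda=0$ (the first branch, already settled) or the constraint is tight, i.e.\ $\varphi(\lambda)=0$.

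It remains to produce a positive root and confirm that it pins down the projection. Each map $\lambda\mapsto \proj_{[0,1]}(x_e-\lambda)$ is continuous and nonincreasing, so $\varphi$ is continuous and nonincreasing; moreover $\varphi(0)=\|\proj_{\mathrm{Box}[\bm{0},\bm{1}]}(\bm{x})\|_1-b>0$ in this case while $\varphi(\lambda)\to -b<0$ as $\lambda\to\infty$, so the intermediate value theorem supplies a root $\lambda^*>0$. Taking $\bm{y}^\star=\proj_{\mathrm{Box}[\bm{0},\bm{1}]}(\bm{x}-\lambda^*\bm{1})$, I would then verify the full KKT system---stationarity is built into the per-coordinate box projection, primal feasibility is $\varphi(\lambda^*)=0$, dual feasibility is $\lambda^*>0$, and complementary slackness holds since the constraint is active---so $\bm{y}^\star=\proj_C(\bm{x})$. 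The main obstacle I anticipate is the stationarity check at active box coordinates: there the optimality condition reads $x_e-\lambda^*-y^\star_e\in N_{[0,1]}(y^\star_e)$, the normal cone of $[0,1]$ at $y^\star_e$, which I would verify separately in the three regimes $x_e-\lambda^*<0$, $x_e-\lambda^*\in[0,1]$, and $x_e-\lambda^*>1$; the existence of $\lambda^*$ is routine, and the monotonicity of $\varphi$ ensures that every root yields the same $\bm{y}^\star$.
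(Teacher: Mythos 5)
Your proposal is correct, and it is worth noting that the paper itself offers no proof of this statement at all: it is stated as a \emph{Fact} and delegated entirely to Example~6.32 (together with Lemma~6.26) in Beck's textbook. Your KKT argument therefore supplies, from scratch, exactly the derivation that the paper outsources, and it is essentially the standard one underlying Beck's result: attach a single multiplier $\lambda$ to the linear constraint $\sum_e y_e\leq b$ (which is what $\|\bm{y}\|_1\leq b$ becomes on the box), exploit coordinate-wise separability to see that the partial Lagrangian is minimized at $\proj_{\mathrm{Box}[\bm{0},\bm{1}]}(\bm{x}-\lambda\bm{1})$, and then close the system via complementary slackness and the intermediate value theorem applied to the continuous nonincreasing function $\varphi$. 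Two details you handle that deserve to be kept explicit in any write-up: first, in the second branch the multiplier must be strictly positive (if $\lambda=0$ the unconstrained-over-the-box minimizer would be $\proj_{\mathrm{Box}[\bm{0},\bm{1}]}(\bm{x})$, which is infeasible there), so the constraint is genuinely tight; second, the statement's claim that \emph{any} positive root works follows from your observation that each coordinate map $\lambda\mapsto\proj_{[0,1]}(x_e-\lambda)$ is nonincreasing, so if the sum of these maps is constant (equal to $b$) between two roots, every coordinate is constant on that interval and all roots produce the same point. With those points spelled out, your argument is a complete and correct substitute for the citation.
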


In practice, we can compute the value of $\lambda^*$ for $\bm{x}\coloneqq \bm{x}_{t} - \eta_t \cdot \widehat{f}'_{(G,v)}(\bm{x}_{t})$ using binary search. 
Assume that the stepsize $\eta_t >0$ is no greater than 1 for any iteration $t=0,1,\dots$, which is indeed the case of ours (specified later). 
As initial lower and upper bounds on $\lambda^*$, we can use $0$ and $\max_{e\in \rho(v)}x_e$, respectively. 
From the fact that $\bm{x}_{t}$ is always contained in $C$ and the definition of the subgradient~\eqref{eq:subgradient}, 
we see that 
$\max_{e\in \rho(v)}x_e
\leq 1+f_{(G,v)}(\emptyset)\leq |V|$. 
Therefore, the binary search finds $\lambda^*$ in $O(|\rho(v)|\log (|V|/\delta))$ time with an additive error of $\delta >0$. 
Note that any polynomial-time algorithm cannot recognize an additive error of $o(2^{-|V|^c})$ for constant $c$, due to its bit complexity. 
Hence, if we set $\delta =O(2^{-|V|^c})$, we can assume that the projection is exact, and the time complexity is still polynomial.

\subsection{Convergence result}
Let $L_{\widehat{f}_{(G,v)}}=\widehat{f}_{(G,v)}(\bm{0})$ ($=f_{(G,v)}(\emptyset)$). 
Based on the convergence result of the projected subgradient method in Beck~\cite{Beck17}, 
reviewed in Appendix~\ref{appendix:convergence}, we present the convergence result of Algorithm~\ref{alg:psm}: 
\begin{theorem}\label{thm:psm}
Let $\Theta$ be an upper bound on the half-squared diameter of $C$, i.e., $\Theta\geq \max_{\bm{x},\bm{y}\in C}\frac{1}{2} \|\bm{x}-\bm{y}\|^2$. 
Determine the stepsize $\eta_t$ ($t=0,1,\dots$) as $\eta_t=\frac{\sqrt{2\Theta}}{L_{\widehat{f}_{(G,v)}}\sqrt{t+1}}$.
Let $\widehat{f}^*$ be the optimal value of \textsf{Relaxation}. Then for all $t\geq 2$, it holds that 
\begin{align*}
\widehat{f}^{(t)}_\mathrm{best}-\widehat{f}^*\leq \frac{2(1+\log 3) L_{\widehat{f}_{(G,v)}} \sqrt{2\Theta}}{\sqrt{t+2}}. 
\end{align*}
\end{theorem}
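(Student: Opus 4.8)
The plan is to cast Theorem~\ref{thm:psm} as a direct consequence of the generic convergence guarantee for the projected subgradient method reviewed in Appendix~\ref{appendix:convergence}, whose only nontrivial hypothesis, beyond convexity of the objective and convexity/compactness of the feasible region, is a uniform bound $L$ on the norm of the subgradients employed. Convexity of $\widehat{f}_{(G,v)}$ is already in hand: by Theorem~\ref{thm:submodular} the set function $f_{(G,v)}$ is submodular, so its Lov\'asz extension is convex. The set $C$ is the intersection of a box and an $\ell_1$-ball, hence convex and compact, and $\Theta$ bounds its half-squared diameter by definition. Thus the entire burden of the proof is to show that the subgradient used by Algorithm~\ref{alg:psm} has Euclidean norm at most $L_{\widehat{f}_{(G,v)}}=f_{(G,v)}(\emptyset)$, so that this quantity may legitimately play the role of $L$ in both the stepsize $\eta_t$ and the final bound.

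The key step, which I expect to be the crux, is this subgradient bound. Starting from the explicit formula~\eqref{eq:subgradient}, I would write the $i$-th coordinate of $\widehat{f}'_{(G,v)}(\bm{x}_t)$ as $f_{(G,v)}(X_i)-f_{(G,v)}(X_{i-1})$ and set $c_i\coloneqq f_{(G,v)}(X_{i-1})-f_{(G,v)}(X_i)$. First I would observe that $f_{(G,v)}$ is monotone nonincreasing: enlarging the removed-edge set only deletes edges from $G$, which can only increase shortest-path distances to $v$ and hence can only decrease $h_{G\setminus F}(v)$. Consequently each $c_i\geq 0$. Next, the $c_i$ telescope: since $X_0=\emptyset$ and $X_{|\rho(v)|}=\rho(v)$, we obtain $\sum_{i=1}^{|\rho(v)|}c_i=f_{(G,v)}(\emptyset)-f_{(G,v)}(\rho(v))$. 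Here I would invoke the crucial fact that $f_{(G,v)}(\rho(v))=h_{G\setminus\rho(v)}(v)=0$, because once all incoming edges of $v$ are deleted no other vertex can reach $v$. Hence $\sum_i c_i=f_{(G,v)}(\emptyset)=L_{\widehat{f}_{(G,v)}}$.

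With nonnegativity and this exact value of the $\ell_1$-norm of the coordinate vector $(c_i)_i$ in hand, the desired Euclidean bound follows from $\|\cdot\|_2\leq\|\cdot\|_1$: indeed $\|\widehat{f}'_{(G,v)}(\bm{x}_t)\|^2=\sum_i c_i^2\leq\bigl(\sum_i c_i\bigr)^2=L_{\widehat{f}_{(G,v)}}^2$, so $\|\widehat{f}'_{(G,v)}(\bm{x}_t)\|\leq L_{\widehat{f}_{(G,v)}}$ at every iterate $\bm{x}_t\in C$. This is precisely the uniform subgradient bound required by the generic theorem, and it matches the $L_{\widehat{f}_{(G,v)}}$ appearing in the prescribed stepsize $\eta_t=\frac{\sqrt{2\Theta}}{L_{\widehat{f}_{(G,v)}}\sqrt{t+1}}$.

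Finally, I would apply the convergence result of Appendix~\ref{appendix:convergence} with $L=L_{\widehat{f}_{(G,v)}}$ and half-squared-diameter bound $\Theta$. Substituting these two quantities, together with the stated stepsize schedule, into that theorem yields exactly $\widehat{f}^{(t)}_\mathrm{best}-\widehat{f}^*\leq\frac{2(1+\log 3)L_{\widehat{f}_{(G,v)}}\sqrt{2\Theta}}{\sqrt{t+2}}$ for all $t\geq 2$. The constants $1+\log 3$ and the shift $t+2$ arise entirely from bounding the harmonic-type sums $\sum_t\eta_t$ and $\sum_t\eta_t^2$ inside that generic analysis, so no further work specific to our problem is needed once the subgradient bound is in place; the only problem-specific ingredients are the monotonicity of $f_{(G,v)}$ and the vanishing $f_{(G,v)}(\rho(v))=0$ identified above.
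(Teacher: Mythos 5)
Your proposal is correct and follows the same overall route as the paper: both arguments reduce Theorem~\ref{thm:psm} to the generic projected-subgradient guarantee (Fact~\ref{thm:psm_original}, i.e., Theorem~8.30 in Beck) with $\Theta$ as the half-squared-diameter bound and the constants $1+\log 3$, $t+2$ inherited as a black box. The difference lies in which hypothesis each argument actually verifies. The paper's proof spends all of its effort on the functional-analytic conditions of Assumption~\ref{assump:1}: it extends $\widehat{f}_{(G,v)}$ by $+\infty$ outside $C$, checks properness and closedness, observes that the interior-of-domain condition in fact \emph{fails} for this extension but is used only to guarantee subdifferentiability over $C$ (which holds), and confirms nonemptiness of the optimal set; it never spells out why $L_{\widehat{f}_{(G,v)}}=f_{(G,v)}(\emptyset)$ is a legitimate bound on the subgradient norms, which is exactly what licenses substituting it for $L_f$ in Fact~\ref{thm:psm_original}. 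Your proof supplies precisely that missing quantitative step, and it is sound: monotonicity of $f_{(G,v)}$ (edge deletions only increase distances to $v$) gives nonnegative increments $c_i=f_{(G,v)}(X_{i-1})-f_{(G,v)}(X_i)$, telescoping gives $\sum_i c_i=f_{(G,v)}(\emptyset)-f_{(G,v)}(\rho(v))=f_{(G,v)}(\emptyset)$ since deleting all of $\rho(v)$ makes $v$ unreachable, and $\|\cdot\|_2\leq\|\cdot\|_1$ closes the bound (in fact only $f_{(G,v)}(\rho(v))\geq 0$ is needed, so the exact vanishing is a convenience rather than a necessity). Conversely, you wave through the hypotheses the paper is careful about: your claim that convexity, compactness, and the subgradient bound are the ``only nontrivial'' requirements overlooks that Beck's theorem, as stated, also demands that the interior of the objective's domain contain $C$, a condition that does not hold here and requires the paper's explicit workaround. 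So the two arguments are complementary, and a fully rigorous proof needs both your subgradient-norm bound and the paper's verification of (or remark circumventing) Assumption~\ref{assump:1}.
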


The proof is in Appendix~\ref{appendix:proof_convergence}. 
By this theorem and the above discussion of the time complexity, the following is straightforward: 
\begin{corollary}
Let $\epsilon'>0$. Set the stopping condition of Algorithm~\ref{alg:psm} as follows: 
\begin{align*}
t\geq \left(\frac{2(1+\log 3)L_{\widehat{f}_{(G,v)}}\sqrt{2\Theta}}{\epsilon'}\right)^2-2. 
\end{align*}
Then, Algorithm~\ref{alg:psm} outputs, in polynomial time, an $\epsilon'$-additive approximate solution for \textsf{Relaxation}. 
\end{corollary}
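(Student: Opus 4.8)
The plan is to obtain this corollary as a direct consequence of Theorem~\ref{thm:psm} together with the per-iteration time-complexity analysis already developed for Algorithm~\ref{alg:psm}. The statement bundles two assertions—correctness (the returned iterate is an $\epsilon'$-additive solution) and polynomial running time—so I would establish these separately, treating the convergence bound as a black box for the first and the cost accounting for the second.

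For correctness, I would simply substitute the stopping condition into the convergence guarantee. Starting from $t \geq \left(\frac{2(1+\log 3)L_{\widehat{f}_{(G,v)}}\sqrt{2\Theta}}{\epsilon'}\right)^2 - 2$ and rearranging gives $\sqrt{t+2} \geq \frac{2(1+\log 3)L_{\widehat{f}_{(G,v)}}\sqrt{2\Theta}}{\epsilon'}$, hence $\frac{2(1+\log 3)L_{\widehat{f}_{(G,v)}}\sqrt{2\Theta}}{\sqrt{t+2}} \leq \epsilon'$. Feeding this into Theorem~\ref{thm:psm} yields $\widehat{f}^{(t)}_\mathrm{best} - \widehat{f}^* \leq \epsilon'$, so the best iterate attained by the algorithm is an $\epsilon'$-additive approximate solution for \textsf{Relaxation}. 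One edge case I would flag: Theorem~\ref{thm:psm} is stated for $t \geq 2$, so if $\epsilon'$ is large enough that the threshold drops below $2$, I would simply run until $t = 2$, at which point the bound is already at least as strong.

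For the running time, I would bound both the number of iterations and the cost per iteration. The iteration count equals the threshold, so the crux is verifying that the problem-dependent constants are polynomially bounded: here $L_{\widehat{f}_{(G,v)}} = f_{(G,v)}(\emptyset) = h_G(v) \leq |V|-1$, and since $C \subseteq [0,1]^{\rho(v)}$ its half-squared diameter obeys $\Theta \leq \tfrac{1}{2}|\rho(v)| \leq \tfrac{1}{2}(|V|-1)$, together giving an $O(|V|^3/\epsilon'^2)$ bound on the number of iterations. For the per-iteration cost I would invoke the earlier analysis: computing the subgradient via~\eqref{eq:subgradient} takes $O(|\rho(v)|(|V|+|A|))$ time, and the projection onto $C$ is found by binary search for $\lambda^*$ in $O(|\rho(v)|\log(|V|/\delta))$ time, where choosing $\delta = O(2^{-|V|^c})$ makes the projection effectively exact by the bit-complexity argument while keeping the time polynomial. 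Multiplying a polynomial iteration count by polynomial per-iteration cost yields overall polynomial time. The algebraic rearrangement is routine, so the step that genuinely demands care is confirming that $L_{\widehat{f}_{(G,v)}}$ and $\Theta$ are polynomially bounded in the input—this is what upgrades a merely finite iteration count to a polynomial one—with the secondary subtlety being the appeal to bit complexity so that the approximate projection can be treated as exact in the convergence analysis.
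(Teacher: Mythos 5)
Your proposal is correct and follows essentially the same route the paper intends: the paper declares the corollary ``straightforward'' from Theorem~\ref{thm:psm} and the preceding time-complexity discussion, and your proof fills in exactly those two pieces---substituting the stopping threshold into the convergence bound for correctness, and combining the polynomial bounds $L_{\widehat{f}_{(G,v)}}=h_G(v)\leq |V|-1$ and $\Theta\leq \tfrac{1}{2}|\rho(v)|$ with the per-iteration subgradient and projection costs (including the bit-complexity argument for $\delta$) for the running time. Your handling of the $t\geq 2$ edge case and the note that the guarantee applies to the best-achieved iterate (so the algorithm should track it) are sensible refinements consistent with the paper's treatment.
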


\section{Experimental Evaluation}\label{sec:experiments}
In this section, we evaluate the performance of our proposed algorithms (i.e., Algorithms~\ref{alg:fast} and~\ref{alg:biapprox}) using various real-world networks.

\subsection{Setup}

\noindent \textbf{Instances.}
Table~\ref{tab:instance} lists real-world digraphs on which our experiments were conducted. 
All graphs were collected from the webpage of The KONECT Project\footnote{\url{http://konect.cc/}}. 
Note that self-loops and multiple edges were removed so that the graphs are made simple. 
For each graph, we randomly chose 20 vertices as target vertices among those having the in-degree at least $100$. 
The last column of Table~\ref{tab:instance} gives the statistics of the in-degrees of the target vertices, i.e., the maximum, average, and minimum in-degrees. 
For each graph and each target vertex $v$, we vary the budget $b$ in $\left\{\lfloor \frac{1}{4}|\rho(v)|\rfloor,\lfloor \frac{1}{2}|\rho(v)|\rfloor,\lfloor \frac{3}{4}|\rho(v)|\rfloor \right\}$. 
\begin{table}[t]
\begin{center}
\caption{Real-world graphs used in our experiments.}\label{tab:instance}
\scalebox{0.75}{
\begin{tabular}{lrrr}
\toprule
Name & $|V|$   & $|A|$   &Stat. of in-degrees of targets\\ 
\midrule
\texttt{moreno-blogs} &1,224 &19,022 & $(337,\, 158.80,\, 101)$\\
\texttt{dimacs10-polblogs} &1,224 & 33,430 & $(274,\, 143.45,\, 104)$\\
\texttt{librec-ciaodvd-trust} &4,658 &40,133 & $(361,\, 152.50,\, 100)$ \\
\texttt{munmun-twitter-social} &465,017 &834,797 & $(174,\, 123.65,\, 100)$\\
\texttt{citeseer} &384,054 &1,744,590 & $(495,\, 190.70,\, 106)$\\
\texttt{youtube-links} &1,138,494 &4,942,297 & $(\text{1,311},\, 273.40,\, 100)$\\
\texttt{higgs-twitter-social} &456,626 &14,855,819 & $(\text{1,049},\, 280.20,\, 111)$\\
\texttt{soc-pokec-relationships} &1,632,803 &30,622,564 & $(316,\, 157.35,\, 101)$\\
\bottomrule
\end{tabular}
}
\end{center}
\end{table}

\smallskip
\noindent \textbf{Baselines.} We employ the following baseline methods: 
\begin{itemize}
\leftskip=-5pt
\item \textsf{Empty}: This algorithm just outputs the empty set, thus presenting an upper bound on the objective function value (i.e., $h_G(v)$) of any feasible solution. 
\item  \textsf{Random}: This algorithm randomly chooses $b$ edges from $\rho(v)$. For each instance, this algorithm is run 100 times and the average objective value is reported. 
\item \textsf{Degree}: This algorithm sorts the elements of $N_\mathrm{in}(v)$ in the order of $(w_1,\dots w_{|\rho(v)|})$ so that $|\rho(w_1)|\geq \cdots \geq |\rho(w_{|\rho(v)|})|$ and just returns $\{(w_1,v),\dots, (w_b,v)\}$. 
\item \textsf{Greedy}: Execute the greedy algorithm (Algorithm~\ref{alg:greedy}). 
\end{itemize}

\noindent \textbf{Machine specs and code.}
All experiments were conducted on Mac mini with Apple M1 Chip and 16~GB RAM. 
All codes were written in Python 3.9, which are available online.\footnote{\url{https://github.com/atsushi-miyauchi/Local-Centrality-Minimization}}

\begin{figure*}
\includegraphics[width=0.245\textwidth]{./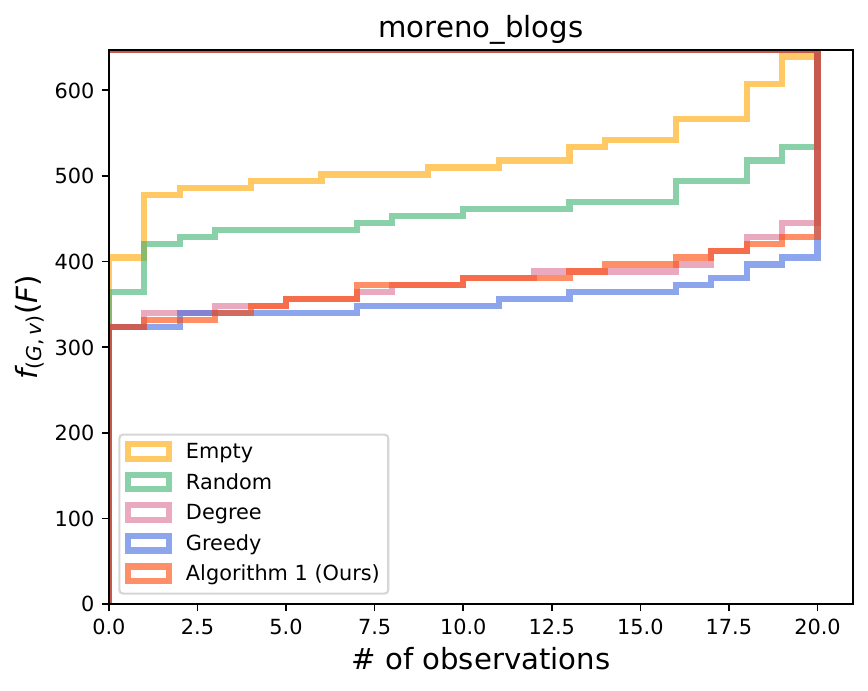}
\includegraphics[width=0.245\textwidth]{./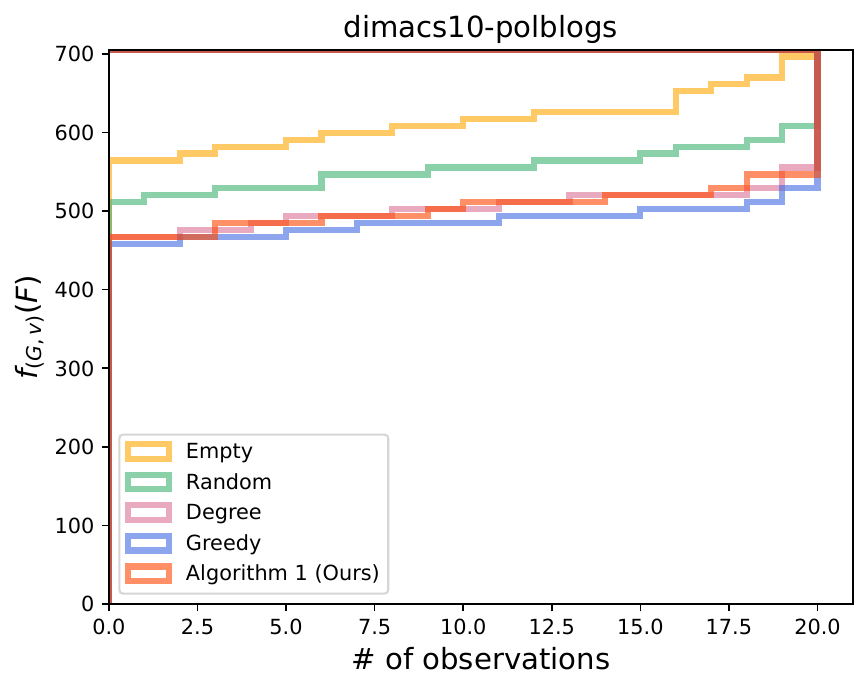}
\includegraphics[width=0.245\textwidth]{./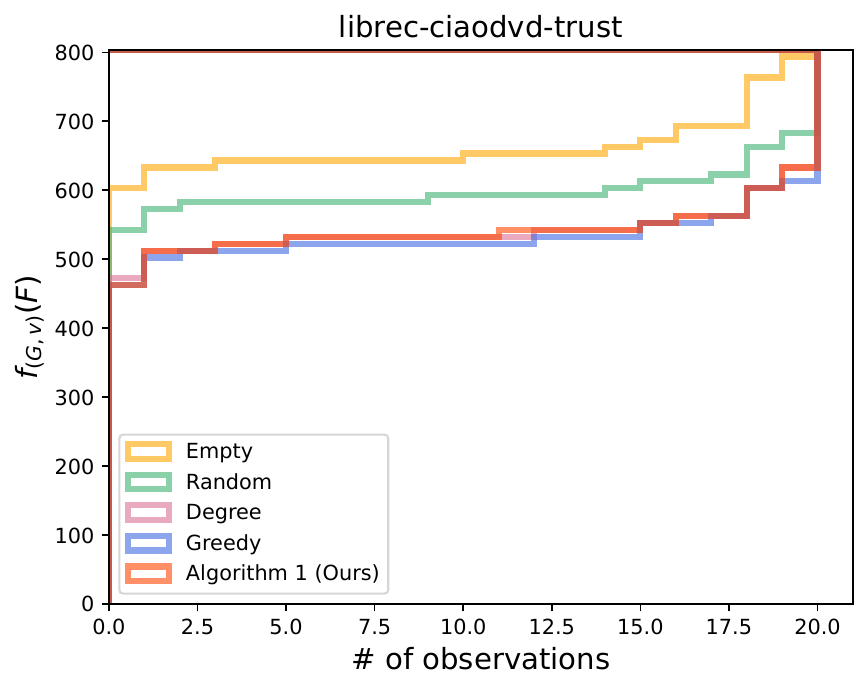}
\includegraphics[width=0.245\textwidth]{./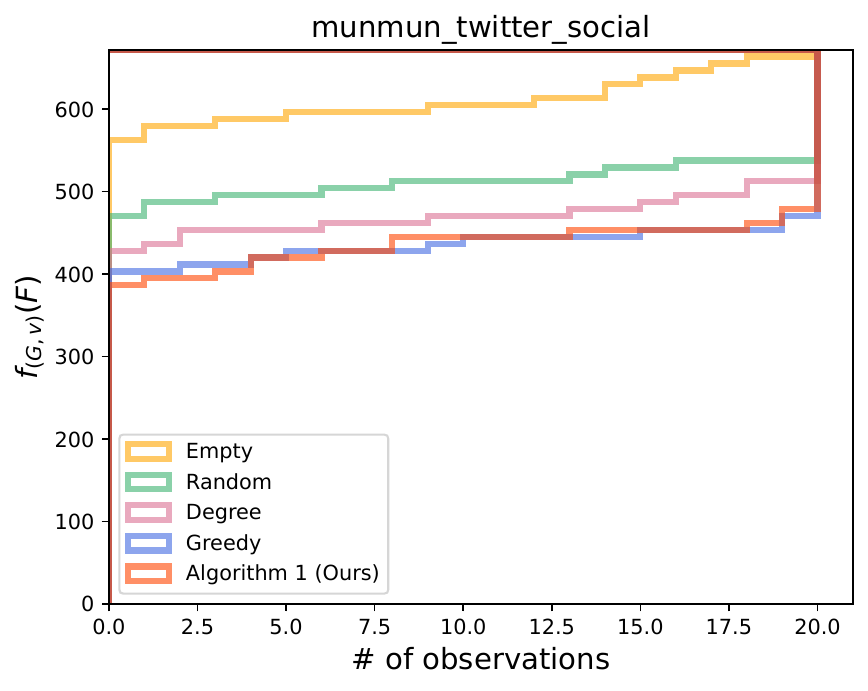}
\includegraphics[width=0.245\textwidth]{./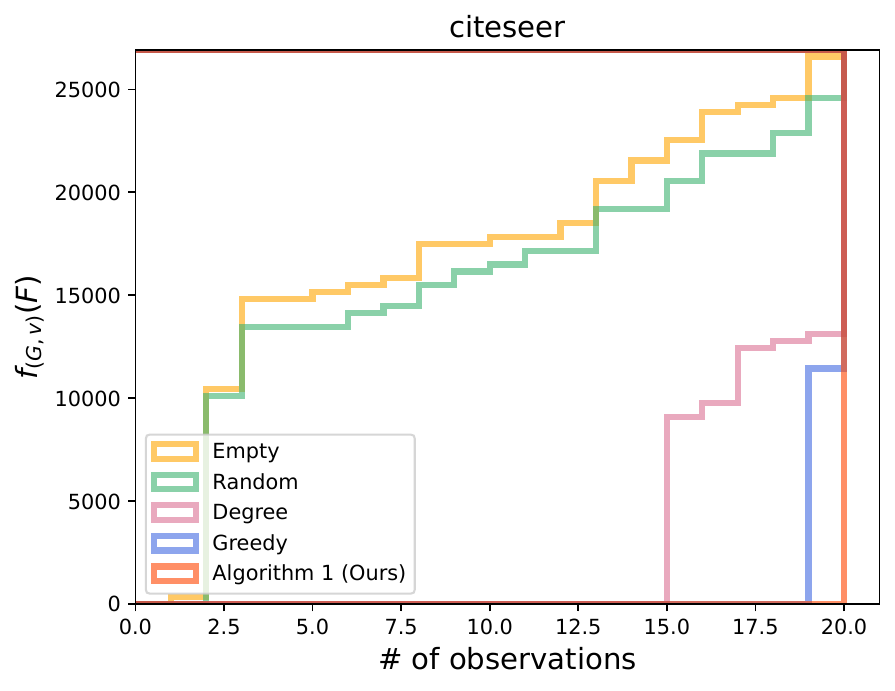}
\includegraphics[width=0.245\textwidth]{./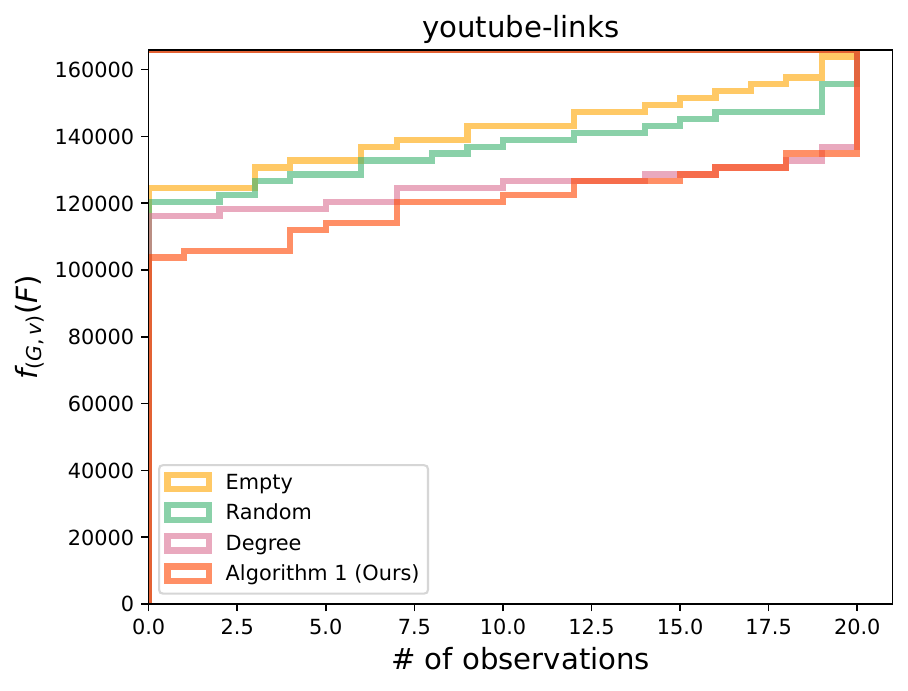}
\includegraphics[width=0.245\textwidth]{./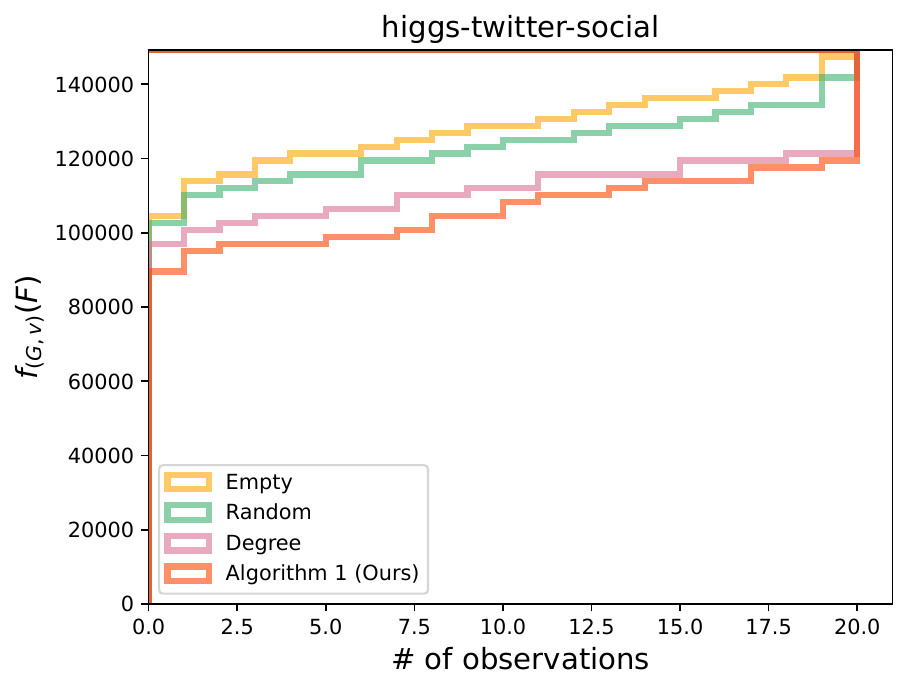}
\includegraphics[width=0.245\textwidth]{./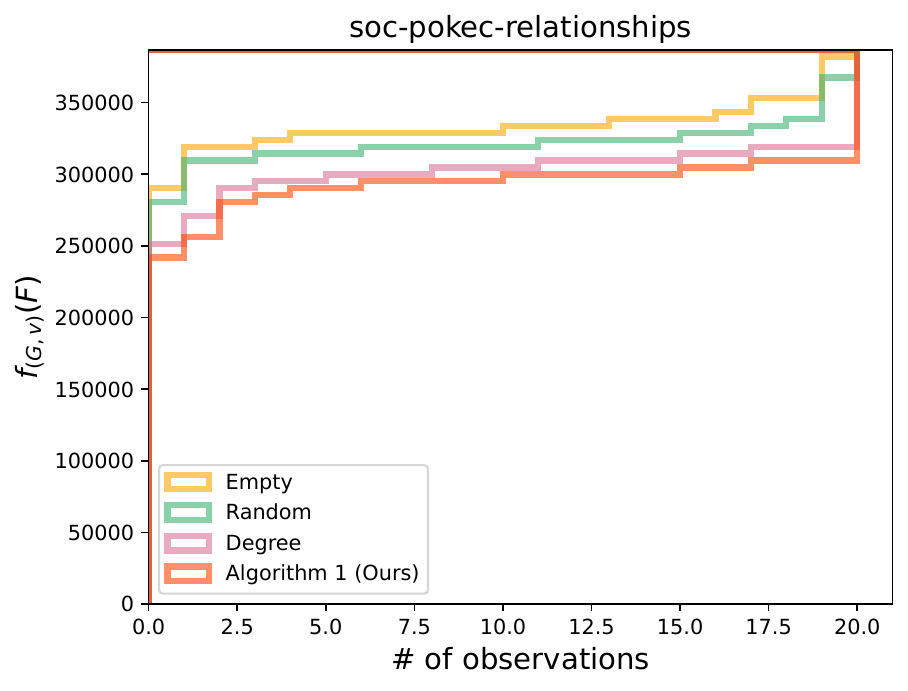}
\caption{Quality of solutions of the algorithms (except for Algorithm~\ref{alg:biapprox}) with $b=\lfloor \frac{1}{2}|\rho(v)|\rfloor$.}\label{fig:performance}
\end{figure*}

\subsection{Performance of algorithms}

Here we evaluate the performance of our algorithms. 
To this end, we run the algorithms together with the baselines for all graphs, target vertices, and budgets. 
For each graph and each budget, if the algorithm tested does not terminate within one hour for the target vertex having the largest in-degree, 
we do no longer run the algorithm for the other target vertices. 
Algorithm~\ref{alg:psm} (in Algorithm~\ref{alg:biapprox}) is run with stopping condition $t\geq 1000$ for scalability and initial solution $\bm{x}_0=\bm{0}$.

The quality of solutions of the algorithms except for Algorithm~\ref{alg:biapprox} is illustrated in Figure~\ref{fig:performance}. 
Due to space limitations, only the results for the budget $b=\lfloor \frac{1}{2}|\rho(v)|\rfloor$ are presented here. 
Although the trend is similar, the results for the other budget settings $b=\lfloor \frac{1}{4}|\rho(v)|\rfloor, \lfloor \frac{3}{4}|\rho(v)|\rfloor$ are given in Appendix~\ref{appendix:figures}. 
As the solutions of Algorithm~\ref{alg:biapprox} may violate the budget constraint, it is unfair to compare those with the others in the same plots; 
thus, the solutions are evaluated later. 
In the plots in Figure~\ref{fig:performance}, once we fix a value in the vertical axis, 
we can observe the cumulative number of solutions (i.e., targets) that attain the harmonic centrality score at most the fixed value. 
Therefore, we can say that an algorithm drawing a lower line has a better performance.

\begin{table}[t]
\begin{center}
\caption{Computation time (seconds) of the algorithms tested.}\label{tab:time}
\scalebox{0.75}{
\begin{tabular}{lcrrrrrrrrrrrrrrrrrr}
\toprule
Name & $b$ &
\textsf{Greedy}&
Algorithm~\ref{alg:fast}&
Algorithm~\ref{alg:biapprox}
\\
\midrule
& $\lfloor \frac{1}{4}|\rho(v)| \rfloor$ &4.53 & 0.10 &326.55 \\
\texttt{moreno-blogs} & $\lfloor \frac{1}{2}|\rho(v)| \rfloor$ &7.79 & 0.10 &333.48 \\
& $\lfloor \frac{3}{4}|\rho(v)| \rfloor$ &9.75 & 0.10 &332.48 \\
\midrule
& $\lfloor \frac{1}{4}|\rho(v)| \rfloor$ & 4.40 & 0.12 & 393.72\\
\texttt{dimacs10-polblogs} & $\lfloor \frac{1}{2}|\rho(v)| \rfloor$ & 7.59 & 0.13 &416.21 \\
& $\lfloor \frac{3}{4}|\rho(v)| \rfloor$ & 9.55 & 0.13 &404.52 \\
\midrule
& $\lfloor \frac{1}{4}|\rho(v)| \rfloor$ & 6.17 & 0.15 &482.83 \\
\texttt{librec-ciaodvd-trust} & $\lfloor \frac{1}{2}|\rho(v)| \rfloor$ & 10.59 & 0.15 &502.63 \\
& $\lfloor \frac{3}{4}|\rho(v)| \rfloor$ & 13.33 & 0.15 & 489.92 \\
\midrule
& $\lfloor \frac{1}{4}|\rho(v)| \rfloor$ &4.03 &0.14 &440.70  \\
\texttt{munmun-twitter-social} & $\lfloor \frac{1}{2}|\rho(v)| \rfloor$ &6.90 &0.14 &444.37 \\
& $\lfloor \frac{3}{4}|\rho(v)| \rfloor$ &8.68 &0.14 &446.68  \\
\midrule
& $\lfloor \frac{1}{4}|\rho(v)| \rfloor$ &1,405.08 &4.59 &--- \\
\texttt{citeseer} & $\lfloor \frac{1}{2}|\rho(v)| \rfloor$ &1,671.77 &4.55 &---\\
& $\lfloor \frac{3}{4}|\rho(v)| \rfloor$  &1,662.66 &4.55 &---\\
\midrule
& $\lfloor \frac{1}{4}|\rho(v)| \rfloor$ &--- &130.56 &---\\
\texttt{youtube-links} & $\lfloor \frac{1}{2}|\rho(v)| \rfloor$ &--- &130.54 &---\\
& $\lfloor \frac{3}{4}|\rho(v)| \rfloor$ &--- &132.19 &--- \\
\midrule
& $\lfloor \frac{1}{4}|\rho(v)| \rfloor$ &--- &188.84 &---\\
\texttt{higgs-twitter-social} & $\lfloor \frac{1}{2}|\rho(v)| \rfloor$ &--- &187.92 &---\\
& $\lfloor \frac{3}{4}|\rho(v)| \rfloor$ &--- &187.68 &--- \\
\midrule
& $\lfloor \frac{1}{4}|\rho(v)| \rfloor$ &--- &402.24 &---\\
\texttt{soc-pokec-relationships} & $\lfloor \frac{1}{2}|\rho(v)| \rfloor$ &--- &392.63 &---\\
& $\lfloor \frac{3}{4}|\rho(v)| \rfloor$ &--- &512.05 &--- \\
\bottomrule
\end{tabular}
}
\end{center}
\end{table}

As can be seen, Algorithm~\ref{alg:fast} outperforms the baselines. 
Indeed, Algorithm~\ref{alg:fast} is applicable to all graphs tested, thanks to its high scalability, 
and the quality of solutions is much better than that of the scalable baselines, i.e., \textsf{Random} and \textsf{Degree}. 
Most interestingly, for the graph \texttt{citeseer}, 
Algorithm~\ref{alg:fast} succeeds in reducing the harmonic centrality scores of all target vertices to the values relatively close to $0$, 
although \textsf{Degree} and \textsf{Greedy} fail to have centrality scores less than 9,000 for some target vertices. 
Note that this result does not contradict the fact that even an optimal solution has the harmonic centrality score no less than $|\rho(v)|-b$: 
The minimum objective value attained by Algorithm~\ref{alg:fast} is $54$, rather than $0$. 
For small graphs, \textsf{Greedy} performs slightly better than Algorithm~\ref{alg:fast}; 
however, as \textsf{Greedy} is quite time consuming, 
it is not applicable to \texttt{youtube-links} and the larger graphs. 

The detailed report of the computation time is found in Table~\ref{tab:time}, 
where the average computation time over all target vertices is reported. 
Note that the results for \textsf{Random} and \textsf{Degree} are omitted 
because \textsf{Random} is obviously quite fast and \textsf{Degree} records 0.00 seconds for all graphs and budgets. 
We remark that the computation time of \textsf{Greedy} grows roughly proportionally to the budget $b$, 
but that of Algorithm~\ref{alg:fast} remains almost the same for all settings of $b$. 

Finally, the quality of solutions of Algorithm~\ref{alg:biapprox} (with $\alpha=\frac{1}{3},\frac{1}{2}$) for $b=\lfloor \frac{1}{4}|\rho(v)|\rfloor$, averaged over the target vertices, is reported in Table~\ref{tab:biapprox}, 
where each of the objective value and the size of the solutions is a relative value compared with that of the solutions of Algorithm~\ref{alg:fast}. 
As the rounding procedure of the algorithm contains randomness, we run the procedure 100 times and report the average value. 
As can be seen, Algorithm~\ref{alg:biapprox} returns very small solutions that even do not exhaust the budget. 
However, this result does not contradict the guarantee given in Theorem~\ref{thm:biapprox}; 
the objective value is not much worse than that of Algorithm~\ref{alg:fast} and the (expected) size of solutions is just upper bounded in Theorem~\ref{thm:biapprox}. 

\begin{table}[t]
\begin{center}
\caption{Quality of solutions of Algorithm~\ref{alg:biapprox} with $\alpha=\frac{1}{3},\frac{1}{2}$ compared with those of Algorithm~\ref{alg:fast} with $b=\lfloor \frac{1}{4}|\rho(v)|\rfloor$.}\label{tab:biapprox}
\scalebox{0.75}{
\begin{tabular}{lrrrr}
\toprule
Name & \multicolumn{2}{c}{$\alpha=\frac{1}{3}$} &\multicolumn{2}{c}{$\alpha=\frac{1}{2}$}\\
\cmidrule(lr){2-3}
\cmidrule(lr){4-5}
     & Obj. val. & Size & Obj. val. & Size \\
\midrule
\texttt{moreno-blogs} &1.12 &0.14 &1.13 &0.12 \\
\texttt{dimacs10-polblogs} &1.09 &0.10 &1.09 &0.07 \\
\texttt{librec-ciaodvd-trust} &1.09 &0.06 &1.09 &0.03\\
\texttt{munmun-twitter-social} &1.15 &0.08 &1.15 &0.08\\
\bottomrule
\end{tabular}
}
\end{center}
\end{table}

To further examine the performance of Algorithm~\ref{alg:biapprox}, 
we run it on the instance that appeared in the proof of Theorem~\ref{thm:fast_limit}, where we set $k=50$ (and $b=50$). 
Note that it is theoretically guaranteed that Algorithm~\ref{alg:fast} outputs a poor solution to this instance. 
On the other hand, Algorithm~\ref{alg:biapprox} with $\alpha=\frac{3}{4}$ always obtains the optimal solution among 100 rounding trials. 
Hence, we see that the algorithm is rather strong against adversarial instances, verifying the effectiveness of the theoretical approximation guarantee of Algorithm~\ref{alg:biapprox}.

\section{Conclusion}\label{sec:conclusion}

In this study, we have introduced a novel optimization model for local centrality minimization 
and designed two effective approximation algorithms. 
To our knowledge, ours are the first polynomial-time algorithms with provable approximation guarantees for centrality minimization. 
Experiments using a variety of real-world networks demonstrate the effectiveness of our proposed algorithms. 

Our work opens up several interesting problems. 
Can we design a polynomial-time algorithm that has an approximation ratio better than that of Algorithm~\ref{alg:fast} or a bicriteria approximation ratio better than that of Algorithm~\ref{alg:biapprox}?
Another interesting direction is to study Problem~\ref{prob:centrality_min} with a more capable setting. 
For example, it would be valuable to have a target vertex subset (rather than a single target vertex) and aim to minimize its group harmonic centrality score, as in the literature on global centrality minimization~\cite{Veremyev+19}.

\clearpage
%%
%% The acknowledgments section is defined using the "acks" environment
%% (and NOT an unnumbered section). This ensures the proper
%% identification of the section in the article metadata, and the
%% consistent spelling of the heading.
\begin{acks}
Lorenzo Severini was supported for this research by Project ECS 0000024 Rome Technopole - CUP B83C22002820006, ``PNRR Missione 4 Componente 2 Investimento 1.5'', funded by European Commission - NextGenerationEU.
\end{acks}

%%
%% The next two lines define the bibliography style to be used, and
%% the bibliography file.
\bibliographystyle{abbrv}
\balance
\bibliography{main}

%\clearpage
\appendix
\section{Omitted Contents from Section~\ref{sec:problem}}

\subsection{Proof of Theorem~\ref{thm:hardness}}\label{appendix:proof_hardness}

To prove the theorem, we formally introduce the following problem:

\begin{problem}[Minimum $k$-union]\label{prob:min_k-union}
Given a ground set $U=\{e_1,\dots,e_n\}$, a set system $\mathcal{S}=\{S_1,\dots, S_m\}\subseteq 2^{U}$, and $k\in \mathbb{Z}_{>0}$,
we are asked to find $J\subseteq \{1,\dots, m\}$ with $|J|=k$ that minimizes $\left| \bigcup_{j\in J} S_j\right|$.
Without loss of generality, it can be assumed that $\bigcup_{j\in \{1,\dots, m\}}S_j=U$.
\end{problem}

\begin{fact}[Theorem~1 in Vinterbo~\cite{Vinterbo_04}]
Problem~\ref{prob:min_k-union} is NP-hard.
\end{fact}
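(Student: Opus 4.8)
The plan is to establish the NP-hardness of Minimum $k$-union (Problem~\ref{prob:min_k-union}) by a polynomial-time many-one reduction from the classical \textsc{Clique} problem, which asks whether a graph $G=(V,E)$ contains a clique of size $k$. The guiding idea is that if every set in the constructed system has the \emph{same} size, then by inclusion--exclusion the size of a $k$-fold union depends only on the pairwise overlaps, so minimizing the union becomes equivalent to maximizing the number of coincidences among the chosen sets --- which encodes edge density, and hence cliqueness.

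Concretely, given a \textsc{Clique} instance $(G=(V,E),k)$, let $\Delta$ denote the maximum degree of $G$. I would take the ground set $U=E\cup\left(\bigcup_{v\in V}P_v\right)$, where each $P_v$ is a block of $\Delta-\deg(v)$ fresh \emph{private} elements, all blocks being pairwise disjoint and disjoint from $E$. For each vertex $v$ I define $S_v=\{e\in E:v\in e\}\cup P_v$, so that $|S_v|=\deg(v)+(\Delta-\deg(v))=\Delta$ for every $v$. The reduction keeps the same parameter $k$ and sets the target bound $B=k\Delta-\binom{k}{2}$. This construction is clearly polynomial: there are $|V|$ sets and $|U|=|V|\Delta-|E|$ elements, and it automatically satisfies the stated WLOG condition $\bigcup_{v}S_v=U$.

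The key step is the identity, valid for any $J\subseteq V$ with $|J|=k$,
\[
\Bigl|\bigcup_{v\in J}S_v\Bigr|=k\Delta-|E(J)|,
\]
where $E(J)$ is the set of edges with both endpoints in $J$. Indeed, the private blocks never overlap and contribute $\sum_{v\in J}(\Delta-\deg(v))$, while the edge parts contribute $\sum_{v\in J}\deg(v)-|E(J)|$ by inclusion--exclusion (each internal edge is double-counted); adding these gives the claim. Since $|E(J)|\le\binom{k}{2}$ with equality exactly when $J$ is a clique, the minimum $k$-union equals $B=k\Delta-\binom{k}{2}$ if and only if $G$ has a clique of size $k$, which completes the reduction.

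The part requiring the most care is the \textbf{uniform-size padding trick}: without it the term $\sum_{v\in J}\deg(v)$ would vary with $J$ and destroy the equivalence between small unions and dense (clique-like) subsets, so the private blocks $P_v$ are what make the degree contribution a constant $k\Delta$. Once uniform sizes are enforced, the remaining work is the routine inclusion--exclusion bookkeeping and the verification of polynomiality, so I anticipate no further obstacle. One may alternatively observe that, by complementing every set within $U$, Minimum $k$-union is equivalent to Maximum $k$-intersection, which is the form in which Vinterbo~\cite{Vinterbo_04} states the hardness.
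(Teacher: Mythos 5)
Your proposal is correct, but it is doing something the paper never does: the paper offers no proof of this statement at all, simply importing it as a black-box citation to Theorem~1 of Vinterbo~\cite{Vinterbo_04}. Your reduction from \textsc{Clique} is the standard and legitimate way to establish the fact directly: the padding blocks $P_v$ of size $\Delta-\deg(v)$ do exactly the job you identify (making every $|S_v|=\Delta$, so that $\sum_{v\in J}\deg(v)$ cancels), the identity $\bigl|\bigcup_{v\in J}S_v\bigr|=k\Delta-|E(J)|$ is verified correctly (private blocks are pairwise disjoint and disjoint from $E$, and edges internal to $J$ are counted twice in $\sum_{v\in J}\deg(v)$), and the threshold $B=k\Delta-\binom{k}{2}$ is met if and only if $G$ contains a $k$-clique, since $|E(J)|\le\binom{k}{2}$ with equality exactly for cliques. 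The construction also satisfies the paper's normalization $\bigcup_j S_j=U$, and solving the optimization problem exactly would decide the threshold question, so NP-hardness of Problem~\ref{prob:min_k-union} follows. What your route buys is self-containedness (no reliance on an external reference) and in fact a slightly stronger statement: hardness already holds on instances in which all sets have equal cardinality; what the paper's route buys is brevity, since for its purposes (feeding into the reduction proving Theorem~\ref{thm:hardness}) only the truth of the fact matters. Two cosmetic cautions: you should note that the reduction is a Karp reduction to the decision version ``is the minimum $k$-union at most $B$?'', and that duplicate sets $S_u=S_v$ can arise in degenerate cases (e.g., an isolated edge when $\Delta=1$), which is harmless because Problem~\ref{prob:min_k-union} selects indices $J\subseteq\{1,\dots,m\}$ rather than distinct sets.
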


\begin{proof}[Proof of Theorem~\ref{thm:hardness}]
We construct a polynomial-time reduction from Problem~\ref{prob:min_k-union} to Problem~\ref{prob:centrality_min}.
Take an arbitrary instance of Problem~\ref{prob:min_k-union}: $U=\{e_1,\dots,e_n\}$, $\mathcal{S}=\{S_1,\dots, S_m\}\subseteq 2^{U}$, and $k\in \mathbb{Z}_{>0}$.
We make an instance (i.e., a gadget) of Problem~\ref{prob:centrality_min}, i.e., $G=(V,A)$, $v\in V$, and $b\in \mathbb{Z}_{>0}$, as follows:
\begin{itemize}
\item $G=(V,A)$ such that
\begin{itemize}
\item $V=\{v_0, v_{S_1}, \dots, v_{S_m}, v_{e_1}, \dots, v_{e_n}\}$,
\item $A=\{(v_{S_j},v_0)\mid j=1,\dots, m\}\cup \{(v_{e_i},v_{S_j})\mid  e_i\in S_j,\, i=1,\dots, n,\, j=1,\dots, m\}$;
\end{itemize}
\item $v=v_0$;
\item $b=m-k$.
\end{itemize}
Clearly, $G$ is acyclic.
Let $F^* \subseteq \rho(v_0)\ (=\{(v_{S_j},v_0)\mid j=1,\dots, m\})$ be an optimal solution to this gadget,
and let $J^*\subseteq \{1,\dots, m\}$ be the index set that satisfies $F^*=\{(v_{S_j},v_0)\mid j\in J^*\}$.
We see that $|J^*|=b$.
Let $\overline{J^*}=\{1,\dots, m\}\setminus J^*$.
Then, the objective value of $F^*$, i.e., the harmonic centrality of $v_0$ after the removal of $F^*$, can be evaluated as
$(m-b) + \frac{1}{2}\left|\bigcup_{j\in \overline{J^*}}S_j\right|$.
Therefore, as $F^*$ is optimal, $\left|\bigcup_{j\in \overline{J^*}}S_j\right|$ is minimized.
Noticing that $|\overline{J^*}|=m-b=m-(m-k)=k$, we see that $\overline{J^*}$ is an optimal solution to the instance of Problem~\ref{prob:min_k-union}.
\end{proof}

\subsection{Proof of Theorem~\ref{thm:submodular}}\label{appendix:proof_submodular}

Before proving the theorem, we mention a fundamental fact of the submodularity.
It is well known that the submodularity is equivalent to the diminishing marginal return property~\cite{Fujishige05}, which can be written as follows:
For any $X\subset Y \subseteq S$ and any $p\in S\setminus Y$, it holds that
\begin{align*}
f(X\cup \{p\}) - f(X) \geq f(Y\cup \{p\}) - f(Y).
\end{align*}

\begin{proof}[Proof of Theorem~\ref{thm:submodular}]
Let $G=(V,A)$ be any digraph and $v\in V$ be any vertex.
From the above fact, it suffices to show that for any $E\subset F\subseteq \rho(v)$ and $e\in \rho(v)\setminus F$,
\begin{align*}
\obj(E\cup \{e\})-\obj(E)\geq \obj(F\cup \{e\})- \obj(F)
\end{align*}
holds.
From the definition of $\obj$, we have
\begin{align*}
&\obj(E\cup \{e\})-\obj(E)\\
&=h_{G\setminus (E\cup \{e\})}(v) - h_{G\setminus E}(v)\\
&=\sum_{u\in V\setminus \{v\}}\left(\frac{1}{d_{G\setminus (E\cup \{e\})}(u,v)} -\frac{1}{d_{G\setminus E}(u,v)}\right).
\end{align*}
For each $e\in \rho(v)$, we define
\begin{align*}
U_G(v, e)=\{&u\in V\setminus \{v\} \mid \\
&\text{there exists a path from $u$ to $v$ on $G$ and}\\
&\text{all shortest paths from $u$ to $v$ on $G$ contain $e$}\}.
\end{align*}
Noticing that $d_{G\setminus (E\cup \{e\})}(u,v)= d_{G\setminus E}(u,v)$ for all $u\in V\setminus U_{G\setminus E}(v,e)$, we have
\begin{align*}
&\sum_{u\in V\setminus \{v\}}\left(\frac{1}{d_{G\setminus (E\cup \{e\})}(u,v)} -\frac{1}{d_{G\setminus E}(u,v)}\right)\\
&=\sum_{u\in U_{G\setminus E}(v,e)}\left(\frac{1}{d_{G\setminus (E\cup \{e\})}(u,v)} -\frac{1}{d_{G\setminus E}(u,v)}\right).
\end{align*}
Applying a similar argument, we also have
\begin{align*}
&\obj(F\cup \{e\})-\obj(F)\\
&=\sum_{u\in U_{G\setminus F}(v,e)}\left(\frac{1}{d_{G\setminus (F\cup \{e\})}(u,v)} -\frac{1}{d_{G\setminus F}(u,v)}\right).
\end{align*}
Combining the above two equalities, we have
\begin{align*}
&\left(\obj(E\cup \{e\})-\obj(E)\right) - \left(\obj(F\cup \{e\})-\obj(F)\right)\\
&=\sum_{u\in U_{G\setminus E}(v,e)}\left(\frac{1}{d_{G\setminus (E\cup \{e\})}(u,v)} -\frac{1}{d_{G\setminus E}(u,v)}\right)\\
&\quad - \sum_{u\in U_{G\setminus F}(v,e)}\left(\frac{1}{d_{G\setminus (F\cup \{e\})}(u,v)} -\frac{1}{d_{G\setminus F}(u,v)}\right)\\
&\geq \sum_{u\in U_{G\setminus E}(v,e)}\left(\left(\frac{1}{d_{G\setminus (E\cup \{e\})}(u,v)} -\frac{1}{d_{G\setminus E}(u,v)}\right)\right.\\
&\qquad \qquad \qquad \ \ \left.-\left(\frac{1}{d_{G\setminus (F\cup \{e\})}(u,v)} -\frac{1}{d_{G\setminus F}(u,v)}\right)\right)\\
&= \sum_{u\in U_{G\setminus E}(v,e)}\left(\frac{1}{d_{G\setminus (E\cup \{e\})}(u,v)}-\frac{1}{d_{G\setminus (F\cup \{e\})}(u,v)}\right)\geq 0,
\end{align*}
where the first inequality follows from
$U_{G\setminus E}(v,e)\subseteq U_{G\setminus F}(v,e)$ and
$d_{G\setminus (F\cup \{e\})}(u,v)\geq d_{G\setminus F}(u,v)$ for any $u\in V$, 
and the second equality follows from $d_{G\setminus E}(u,v)=d_{G\setminus F}(u,v)$ for any $u\in U_{G\setminus E}(v,e)$.
Therefore, we have the theorem.
\end{proof}

We wish to remark that the above theorem can also be proved
using the fact that the objective function of the harmonic centrality maximization problem introduced in Crescenzi et al.~\cite{Crescenzi+16} is submodular.
However, if we employed such an approach, the notation would be more complex.
Therefore, we proved it from scratch, based on the definition of the submodularity.

\subsection{Pseudo-code of the greedy algorithm}\label{appendix:greedy}

See Algorithm~\ref{alg:greedy}.

\begin{algorithm}[t]
\caption{Greedy algorithm for Problem~\ref{prob:centrality_min}}
\label{alg:greedy}
\SetKwInOut{Input}{Input}
\SetKwInOut{Output}{Output} \Input{\ $G=(V,A)$, $v\in V$, and $b\in \mathbb{Z}_{>0}$}
\Output{\ $F\subseteq \rho(v)$ with $|F|\leq b$}
$F\leftarrow \emptyset$\;
\While{$|F|<b$}{
  Find $e\in \argmin \{\obj(e)\mid e\in \rho(v)\}$\;
  $F\leftarrow F\cup \{e\}$ and $G\leftarrow G\setminus \{e\}$\;
}
\Return $F$\;
\end{algorithm}

\subsection{Proof of Theorem~\ref{prop:greedy}}\label{appendix:proof_greedy}

\begin{proof}
We first show that any algorithm that outputs $F\subseteq \rho(v)$ with $|F|=b$ has an approximation ratio of $O(|V|)$.
Let $G=(V,A)$, $v\in V$, $b\in \mathbb{Z}_{>0}$ be an arbitrary instance of Problem~\ref{prob:centrality_min}.
If the optimal value is equal to $0$, then $b=|\rho(v)|$ holds, which means that the above algorithm can also achieve the optimal value.
If the optimal value is greater than $0$, then it is at least $1$, while any algorithm can achieve the objective value at most $h_G(v)\leq |V|-1$.
Therefore, the above algorithm has an approximation ratio of $O(|V|)$.

Next we show that the greedy algorithm has no approximation ratio of $o(|V|)$.
To this end, for any integer $k\geq 2$, we construct an instance of Problem~\ref{prob:centrality_min} as follows:
Let $G=(V,A)$ be a digraph.
The vertex set $V$ consists of two vertices $n_L, o_L$,
and size-$k$ subsets $N_R=\{n_R^1,\dots, n_R^k\}$ and $O_R=\{o_R^1,\dots, o_R^{k}\}$, in addition to one vertex $v'$.
The edge set $A$ is defined as the union of the following three sets:
\begin{align*}
&\{(n_L,v'), (n_R^i,v')\mid i=1,\dots, k\},\\
&\{(o_L,n_L)\},\\
&\{(o_R^i,n_R^j)\mid i = 1,\dots, k,\, j = 1,\dots, k\}.
\end{align*}
Then we employ $v'$ as target vertex $v$ and $k$ as budget $b$.

From now on, we analyze the performance of Algorithm~\ref{alg:greedy} for this instance.
In the first iteration, the removal of $(n_L,v)$ decreases the objective value by $3/2$,
while the removal of any incoming edge from $N_R$ decreases the objective value by $1$.
Therefore, the algorithm removes $(n_L,v)$.
In the later iterations, the removal of any edge decreases the objective value by $1$, and thus the algorithm removes $k-1$ edges from $N_R$ to $v$ and then terminates.
We see that the objective value of the output is $1+k/2$.
On the other hand, consider the following feasible (actually optimal) solution to the instance, i.e., all $k$ edges from $N_R$ to $v$.
Then the objective value of this solution is $1+1/2$.
Thus, the approximation ratio of the algorithm is lower bounded by
\begin{align*}
\frac{1+k/2}{1+1/2} = \Omega(|V|),
\end{align*}
meaning that Algorithm~\ref{alg:greedy} has no approximation ratio of $o(|V|)$ for Problem~\ref{prob:centrality_min}.
\end{proof}

\section{Omitted Contents from Section~\ref{sec:fast}}

\subsection{Proof of Theorem~\ref{thm:fast_limit}}\label{appendix:proof_tight}

\begin{proof}
For any integer $k\geq 2$, we construct an instance of Problem~\ref{prob:centrality_min} as follows:
Let $G=(V,A)$ be a digraph.
The vertex set $V$ consists of size-$k$ subsets $N_L=\{n_L^1,\dots, n_L^k\}$, $N_R=\{n_R^1,\dots, n_R^k\}$, $O_R=\{o_R^1,\dots, o_R^k\}$
and size-$(k(k-1))$ subset $O_L=\{o_L^1,\dots, o_L^{k(k-1)}\}$, in addition to one vertex $v'$.
The edge set $A$ is defined as the union of the following three sets:
\begin{align*}
&\{(n_L^i,v'), (n_R^i,v')\mid i=1,\dots, k\},\\ &\{(o_L^i,n_L^j)\mid i = 1,\dots, k(k-1),\, j=1,\dots, k,\, \lceil i/(k-1)\rceil =j\},\\
&\{(o_R^i,n_R^j)\mid i = 1,\dots, k, \, j = 1,\dots, k\}.
\end{align*}
Then we employ $v'$ as target vertex $v$ and $k$ as budget $b$.

From now on, we analyze the performance of Algorithm~\ref{alg:fast} for this instance.
For any $n_L^i\in N_L$ ($i=1,\dots, k$), we have $h_{G\setminus \rho(v)}(n_L^i)=k-1$.
For any $n_R^i\in N_R$ ($i=1,\dots, k$), we have $h_{G\setminus \rho(v)}(n_R^i)=k$.
Therefore, the output of Algorithm~\ref{alg:fast} is the set of all $k$ edges from $N_R$ to $v$,
which has an objective value of $k+k(k-1)/2$.
On the other hand, consider the following feasible (actually optimal) solution to the instance, i.e., the set of all $k$ edges from $N_L$ to $v$.
Then the objective value of this solution is $k+k/2$.
Thus, the approximation ratio of the algorithm is lower bounded by
\begin{align*}
\frac{k+k(k-1)/2}{k+k/2} = \frac{1+(k-1)/2}{1+1/2} = \Omega\left(\sqrt{|V|}\right),
\end{align*}
meaning that Algorithm~\ref{alg:fast} has no approximation ratio of $o\left(\sqrt{|V|}\right)$ for Problem~\ref{prob:centrality_min}.
Noticing that $h_G(v)\leq |V|-1$, we have the statement.
\end{proof}

\section{Omitted Contents from Section~\ref{sec:algorithm_cont}}

\subsection{Convergence result in Beck~\cite{Beck17}}\label{appendix:convergence}
For self-containedness, we review the convergence result of the projected subgradient method in Beck~\cite{Beck17},
on which the convergence result of Algorithm~\ref{alg:psm} is based.
Note that for simplicity, the description is appropriately specialized to our setting:
For example, we consider the objective function $f$ on $\mathbb{R}^n$ rather than a general vector space.

Let $f\colon \mathbb{R}^n\rightarrow (-\infty,\infty]$ and $C\subseteq \mathbb{R}^n$. Consider the following problem:
\begin{alignat*}{3}
&\text{(P)}\colon &\ &\text{minimize} &\quad &f(\bm{x})\\
&                 &  &\text{subject to} & &\bm{x}\in C.
\end{alignat*}

\begin{assumption}[Assumption 8.7 in Beck~\cite{Beck17}]\label{assump:1}
The following hold:
\begin{enumerate}
\item $f\colon \mathbb{R}^n \rightarrow (-\infty,\infty]$ is proper closed and convex;
\item $C\subseteq \mathbb{R}^n$ is nonempty closed and convex;
\item The interior of the (effective) domain of $f$ contains $C$;
\item The set of optimal solutions is nonempty, and the optimal value is denoted by $f^*$.
\end{enumerate}
\end{assumption}

We consider the projected subgradient method for (P), which is summarized in Algorithm~\ref{alg:psm_original}.
Note that we denote by $\proj_C(\bm{x})$ the projection of $\bm{x}\in \mathbb{R}^n$ onto $C$.
The sequence generated by the algorithm is $\{\bm{x}_t\}_{t\geq 0}$, while the sequence of function values generated by the algorithm is $\{f(\bm{x}^t)\}_{t\geq 0}$.
As the sequence of function values is not necessarily monotone, we are also interested in the sequence of best-achieved function values at or before $\ell$-th iteration,
which is defined as 
\begin{align*}
f^\ell_\text{best}=\min_{t=0,1,\dots, \ell} f(\bm{x}_t).
\end{align*}

If $C$ is compact, then there exists a constant $L_f>0$ for which $\|f'(\bm{x})\|\leq L_f$ for all $f'(\bm{x}) \in \partial f(\bm{x})$ and $\bm{x}\in C$,
where $\partial f(\bm{x})$ is the subdifferential of $f$ at $\bm{x}$.
Based on this, we have the following convergence result of Algorithm~\ref{alg:psm_original}:
\begin{fact}[A special case of Theorem 8.30 in Beck~\cite{Beck17}]\label{thm:psm_original}
Suppose that Assumption~\ref{assump:1} holds and assume that $C$ is compact.
Let $\Theta$ be an upper bound on the half-squared diameter of $C$, i.e.,
$\Theta\geq \max_{\bm{x},\bm{y}\in C} \frac{1}{2}\|\bm{x}-\bm{y}\|^2$.
Determine the stepsize $\eta_t$ ($t=0,1,\dots$) as $\eta_t=\frac{\sqrt{2\Theta}}{L_f\sqrt{t+1}}$.
Then for all $t\geq 2$, it holds that
\begin{align*}
f^t_\mathrm{best}-f^*\leq \frac{2(1+\log 3) L_f \sqrt{2\Theta}}{\sqrt{t+2}}.
\end{align*}
\end{fact}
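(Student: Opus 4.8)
The plan is to prove Fact~\ref{thm:psm_original} by the classical potential-function (telescoping) argument for projected subgradient methods, with one essential twist: rather than summing the per-iteration inequality over all iterates $0,\dots,t$, I will sum it only over a \emph{suffix window} $s=a,\dots,t$ whose left endpoint $a$ is a fixed fraction of $t$. Telescoping over the full range would leave a spurious $\log t$ factor (the sum $\sum_{s=0}^t \eta_s^2$ grows like $\log t$ while $\sum_{s=0}^t \eta_s$ grows like $\sqrt t$); restricting to a window kills this factor and produces exactly the advertised $O(1/\sqrt{t+2})$ rate with constant $2(1+\log 3)$.

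First I would establish the one-step recursion toward a fixed optimal solution $\bm{x}^*\in C$ (which exists by Assumption~\ref{assump:1}(4)). Since $\bm{x}^*=\proj_C(\bm{x}^*)$ and the Euclidean projection onto the closed convex set $C$ is nonexpansive, applying these standard facts to the update $\bm{x}_{s+1}=\proj_C(\bm{x}_s-\eta_s f'(\bm{x}_s))$ gives $\|\bm{x}_{s+1}-\bm{x}^*\|^2\le\|\bm{x}_s-\bm{x}^*-\eta_s f'(\bm{x}_s)\|^2$, where $f'(\bm{x}_s)\in\partial f(\bm{x}_s)$ is the chosen subgradient. Expanding the square, using the subgradient inequality $\langle f'(\bm{x}_s),\bm{x}_s-\bm{x}^*\rangle\ge f(\bm{x}_s)-f^*$ (from convexity of $f$) and the uniform bound $\|f'(\bm{x}_s)\|\le L_f$ (available because $C$ is compact), I obtain
\[
\|\bm{x}_{s+1}-\bm{x}^*\|^2\le \|\bm{x}_s-\bm{x}^*\|^2-2\eta_s\bigl(f(\bm{x}_s)-f^*\bigr)+\eta_s^2 L_f^2 .
\]

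Next I would telescope this over $s=a,\dots,t$. The left side collapses to $\|\bm{x}_{t+1}-\bm{x}^*\|^2-\|\bm{x}_a-\bm{x}^*\|^2$; discarding the nonnegative term $\|\bm{x}_{t+1}-\bm{x}^*\|^2$ and bounding $\|\bm{x}_a-\bm{x}^*\|^2\le 2\Theta$ (both points lie in $C$) yields $2\sum_{s=a}^t\eta_s(f(\bm{x}_s)-f^*)\le 2\Theta+L_f^2\sum_{s=a}^t\eta_s^2$. Because $f^t_\mathrm{best}-f^*\le f(\bm{x}_s)-f^*$ for every $s\le t$, and in particular for every $s$ in the window, I may replace each left-hand summand by $f^t_\mathrm{best}-f^*$ and divide, giving the key estimate
\[
f^t_\mathrm{best}-f^*\le \frac{2\Theta+L_f^2\sum_{s=a}^t\eta_s^2}{2\sum_{s=a}^t\eta_s}.
\]
The inequality $f^t_\mathrm{best}\le \min_{a\le s\le t}f(\bm{x}_s)$ is what legitimizes the window: narrowing the index set only enlarges the quantity we compare against, so the bound remains valid for the genuine best-over-all-iterates value. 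Substituting $\eta_s=\frac{\sqrt{2\Theta}}{L_f\sqrt{s+1}}$ rewrites the numerator as $2\Theta\bigl(1+\sum_{k=a+1}^{t+1}\frac1k\bigr)$ and the denominator as $2\frac{\sqrt{2\Theta}}{L_f}\sum_{k=a+1}^{t+1}\frac1{\sqrt k}$.

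Finally I would fix $a=\lceil (t+1)/3\rceil$ and control the two tails by integral comparison. The harmonic tail satisfies $\sum_{k=a+1}^{t+1}\frac1k\le \int_a^{t+1}\frac{dx}{x}=\log\frac{t+1}{a}\le\log 3$, contributing the factor $1+\log 3$; the square-root tail satisfies $\sum_{k=a+1}^{t+1}\frac1{\sqrt k}\ge 2\bigl(\sqrt{t+2}-\sqrt{a+1}\bigr)\ge \frac{1}{4}\sqrt{t+2}$ for all $t\ge 2$, contributing the $\sqrt{t+2}$. Combining these gives $f^t_\mathrm{best}-f^*\le \frac{L_f\sqrt{2\Theta}\,(1+\log 3)}{2\cdot\frac{1}{4}\sqrt{t+2}}=\frac{2(1+\log 3)L_f\sqrt{2\Theta}}{\sqrt{t+2}}$. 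I expect the main obstacle to be exactly this last step: verifying that the \emph{single} constant-fraction window $a=\lceil(t+1)/3\rceil$ simultaneously holds the harmonic tail at or below $\log 3$ and the square-root tail at or above $\frac{1}{4}\sqrt{t+2}$ uniformly in $t$, with the small cases (where the ceiling and the integral estimates must be checked directly) forcing the restriction $t\ge 2$. The preceding steps are the routine projected-subgradient potential argument.
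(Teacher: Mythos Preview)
The paper does not prove this statement at all: it is stated as a \emph{Fact} quoted verbatim as ``A special case of Theorem~8.30 in Beck,'' and the paper simply invokes it to derive Theorem~\ref{thm:psm}. So there is no in-paper proof to compare against.

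Your argument is correct and is essentially the standard one (and, as far as the citation indicates, the one Beck uses). The one-step potential inequality is the routine projected-subgradient recursion, and your suffix-window telescoping with $a=\lceil (t+1)/3\rceil$ is exactly the device that removes the $\log t$ factor and delivers the constant $2(1+\log 3)$. Your integral bounds are sound: the harmonic tail is at most $\log\frac{t+1}{a}\le\log 3$ since $a\ge (t+1)/3$, and for the square-root tail one checks $a+1\le (t+6)/3\le \tfrac{49}{64}(t+2)$ for all integers $t\ge 2$, which gives $2(\sqrt{t+2}-\sqrt{a+1})\ge \tfrac14\sqrt{t+2}$. The only implicit ingredient you rely on without naming it is Assumption~\ref{assump:1}(3), which (together with compactness of $C$) guarantees that $\partial f(\bm{x})$ is nonempty and uniformly bounded by some $L_f$ on $C$; this is precisely what the paper records just before stating the Fact.
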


\begin{algorithm}[t]
\caption{Projected subgradient method}
\label{alg:psm_original}
\SetKwInOut{Input}{Input}
\SetKwInOut{Output}{Output}
\Input{\ $\bm{x}_0\in C$ and some stopping condition}
\Output{\ $\bm{x}\in C$}
\While{the stopping condition is not satisfied}{
  Pick stepsize $\eta_t>0$ and subgradient $f'(\bm{x}_t)$ of $f'$ at $\bm{x}_t$\;
  $\bm{x}_{t+1}\leftarrow \proj_C\left(\bm{x}_{t} - \eta_t \cdot  f'(\bm{x}_{t})\right)$ and $t\leftarrow t+1$\;
}
\Return{$\bm{x}_t$}\;
\end{algorithm}

\begin{figure*}[h!]
\includegraphics[width=0.245\textwidth]{./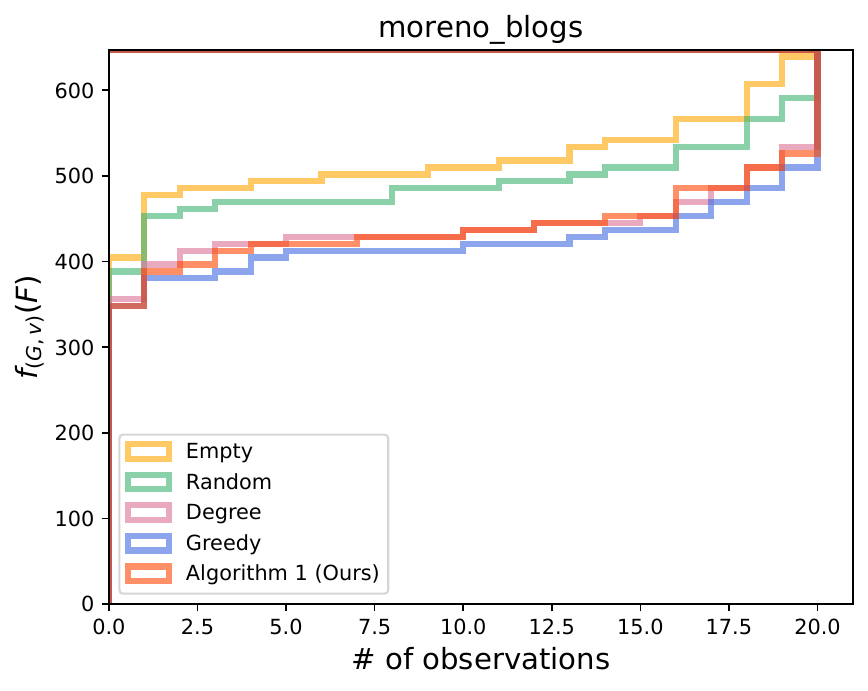}
\includegraphics[width=0.245\textwidth]{./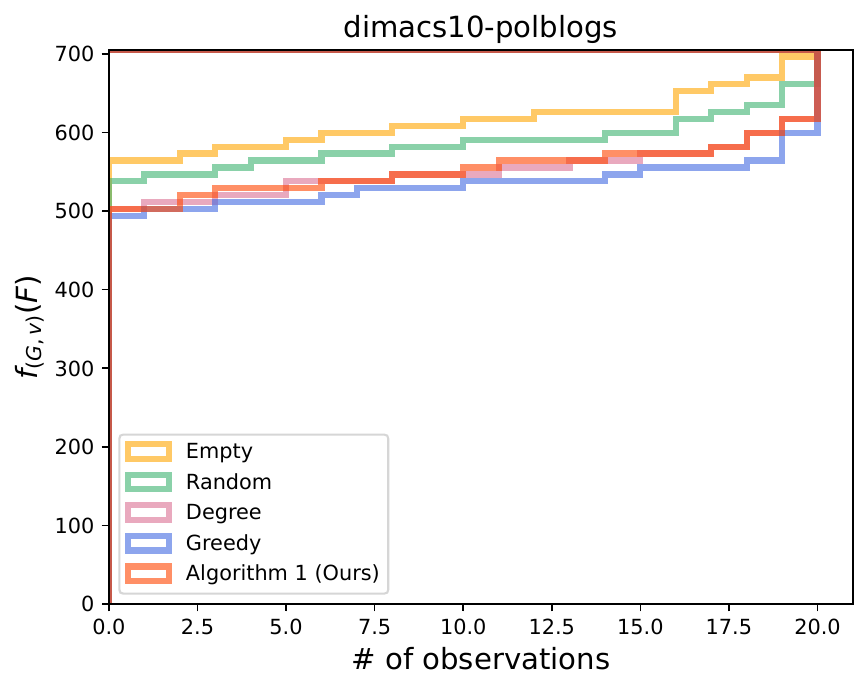}
\includegraphics[width=0.245\textwidth]{./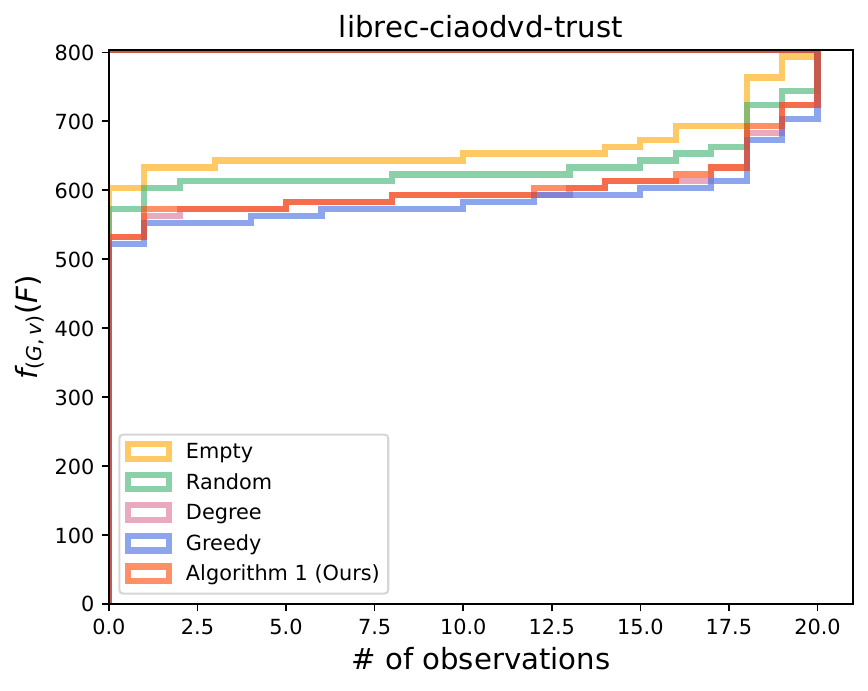}
\includegraphics[width=0.245\textwidth]{./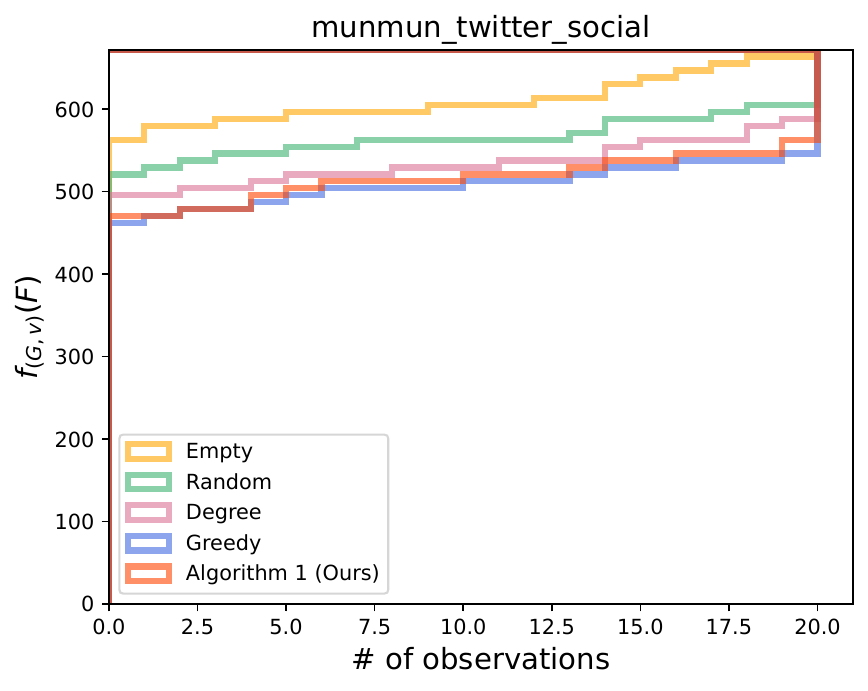}
\includegraphics[width=0.245\textwidth]{./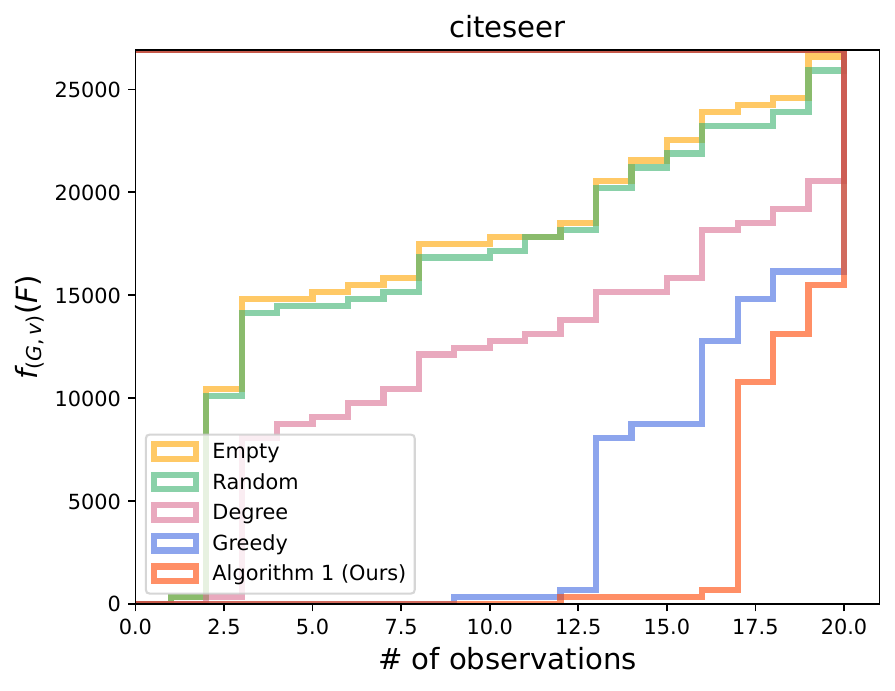}
\includegraphics[width=0.245\textwidth]{./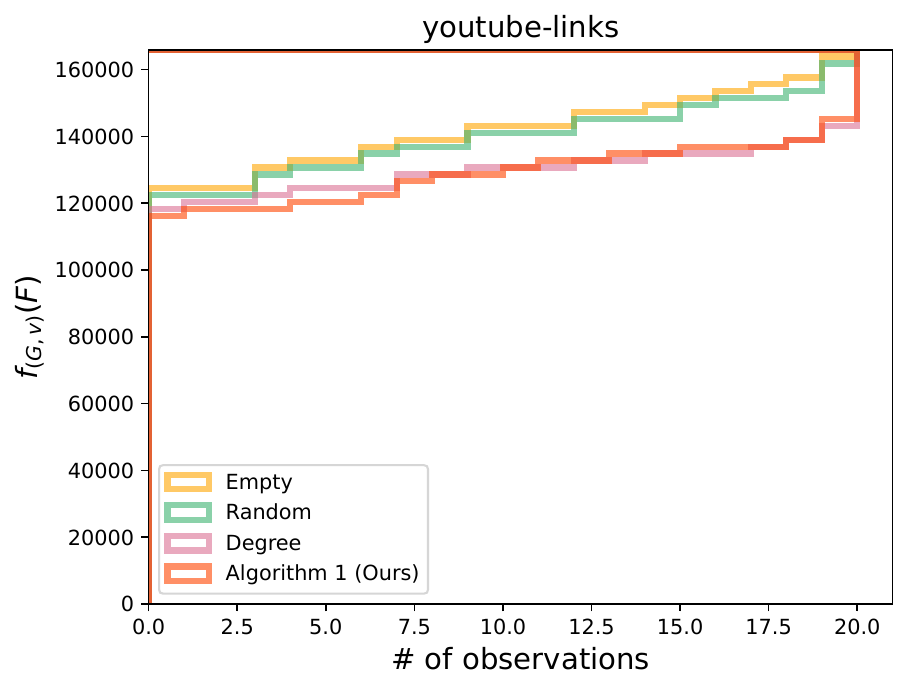}
\includegraphics[width=0.245\textwidth]{./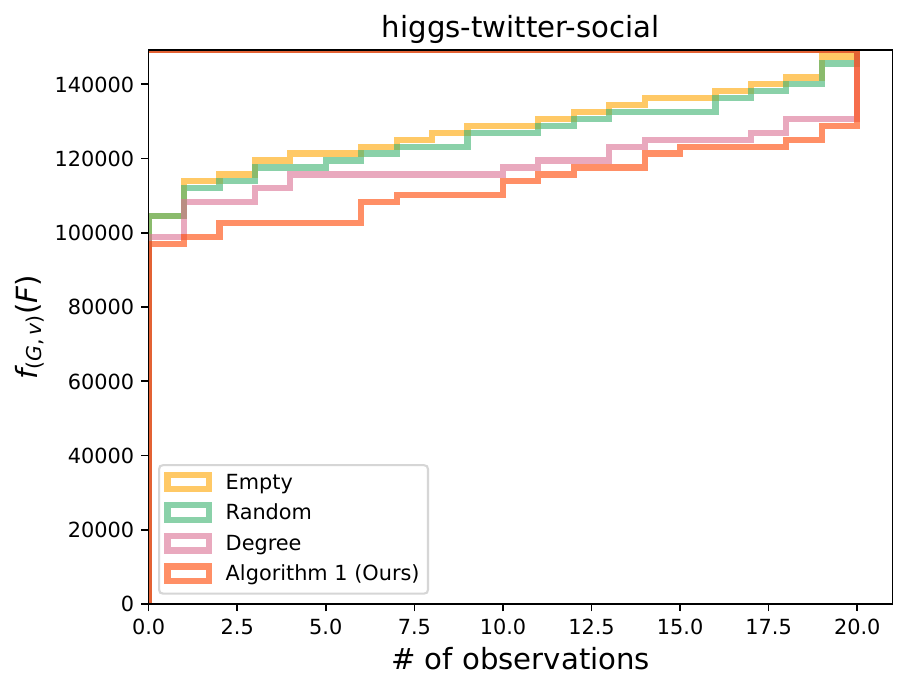}
\includegraphics[width=0.245\textwidth]{./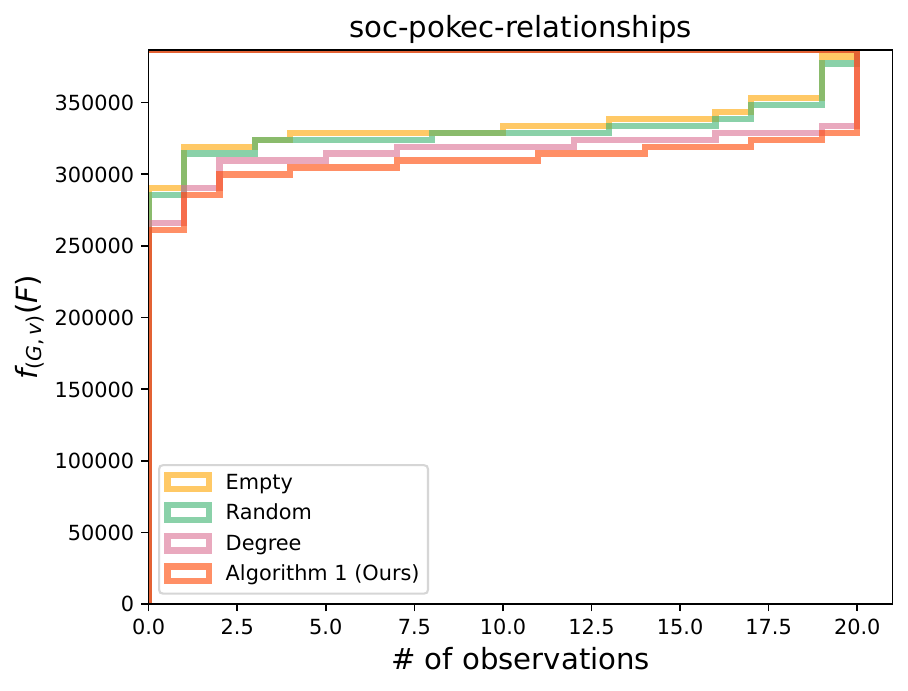}
\caption{Quality of solutions of the algorithms (except for Algorithm~\ref{alg:biapprox}) with $b=\lfloor \frac{1}{4}|\rho(v)|\rfloor$.}\label{fig:performance_small}
\end{figure*}

\begin{figure*}
\includegraphics[width=0.245\textwidth]{./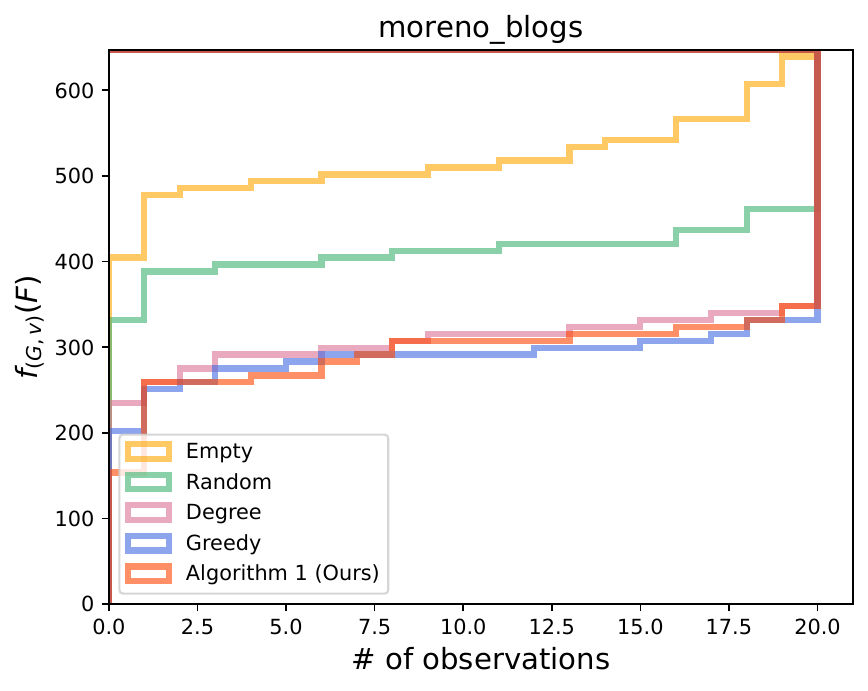}
\includegraphics[width=0.245\textwidth]{./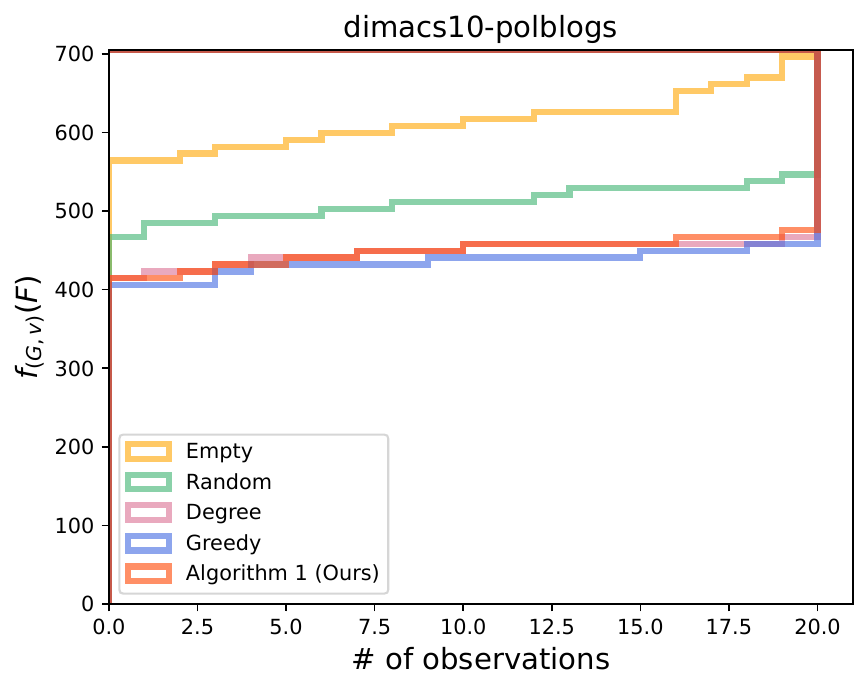}
\includegraphics[width=0.245\textwidth]{./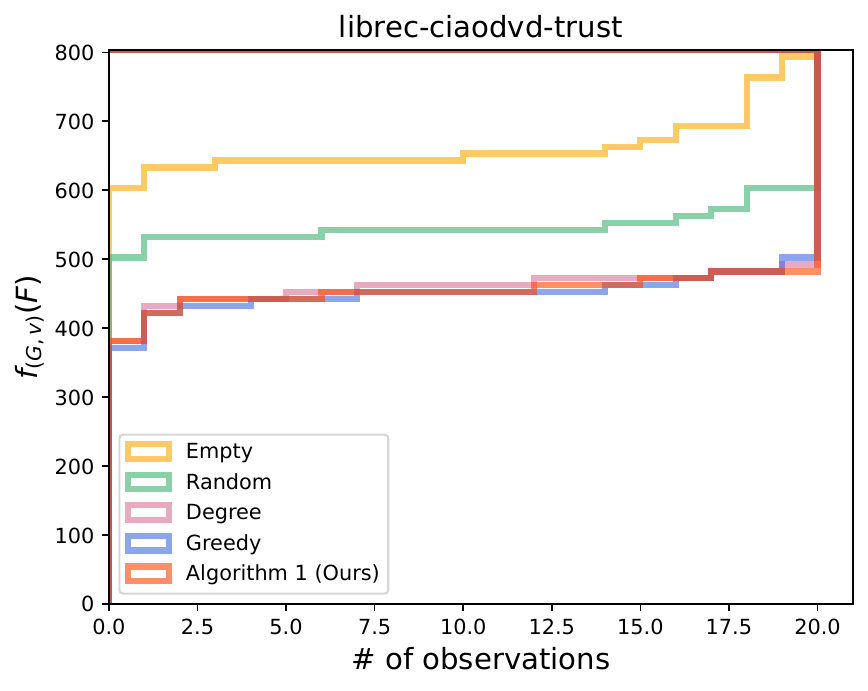}
\includegraphics[width=0.245\textwidth]{./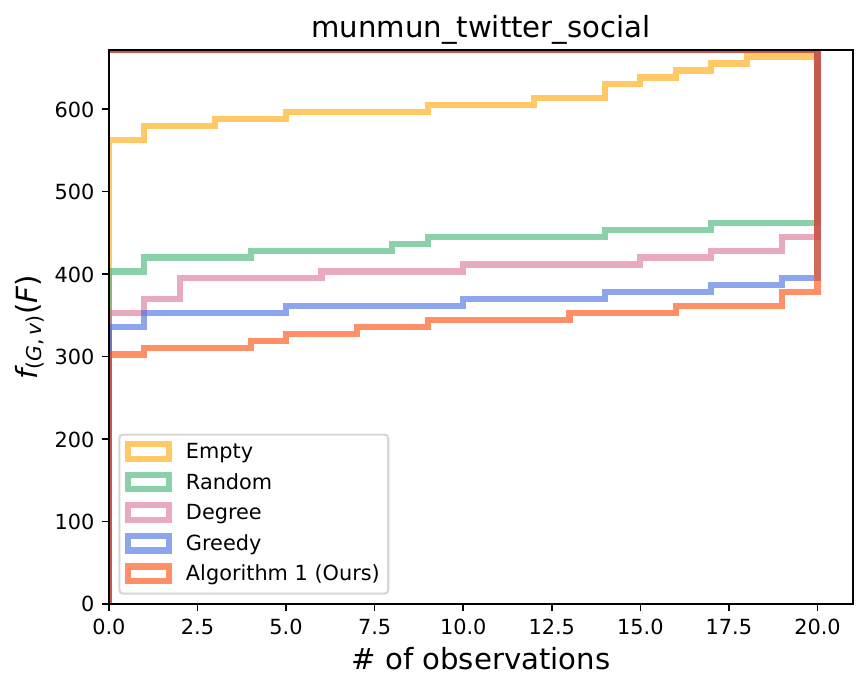}
\includegraphics[width=0.245\textwidth]{./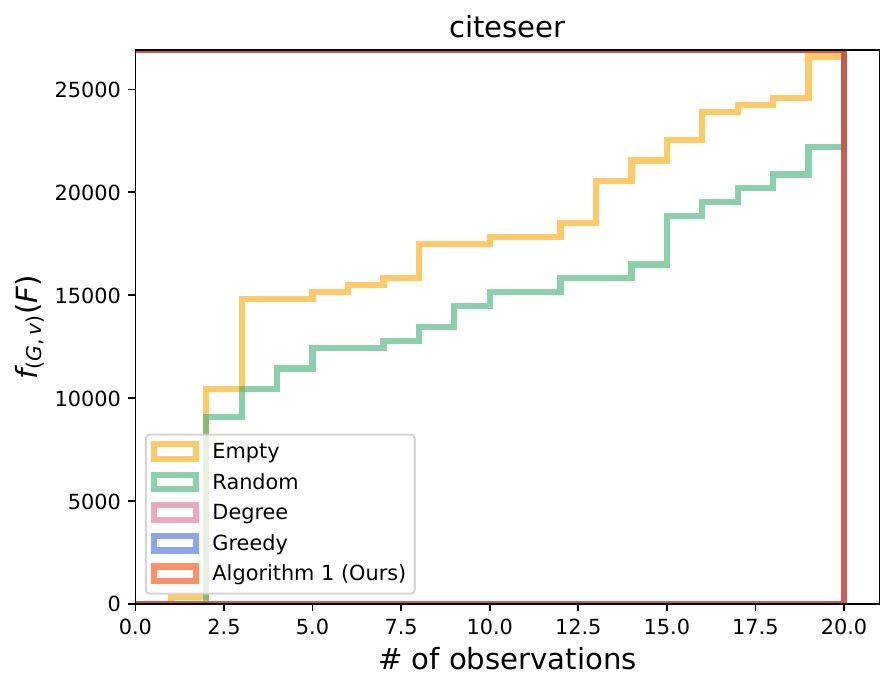}
\includegraphics[width=0.245\textwidth]{./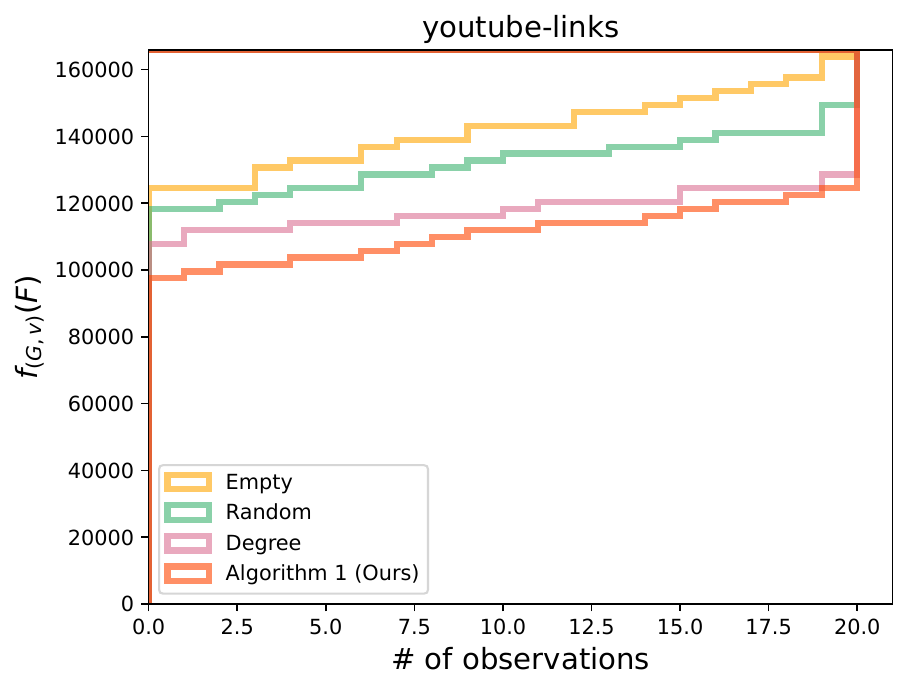}
\includegraphics[width=0.245\textwidth]{./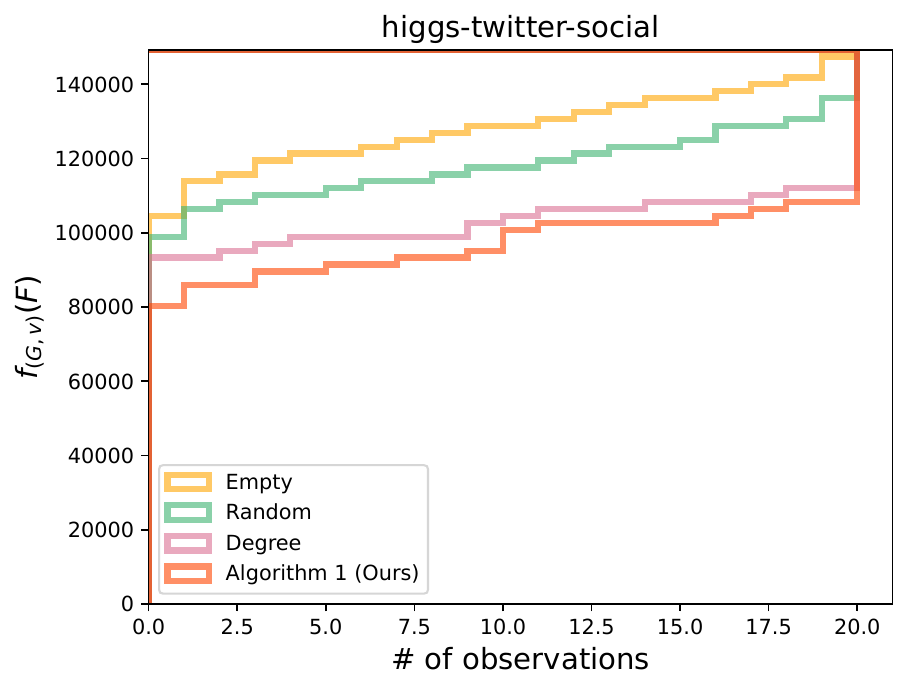}
\includegraphics[width=0.245\textwidth]{./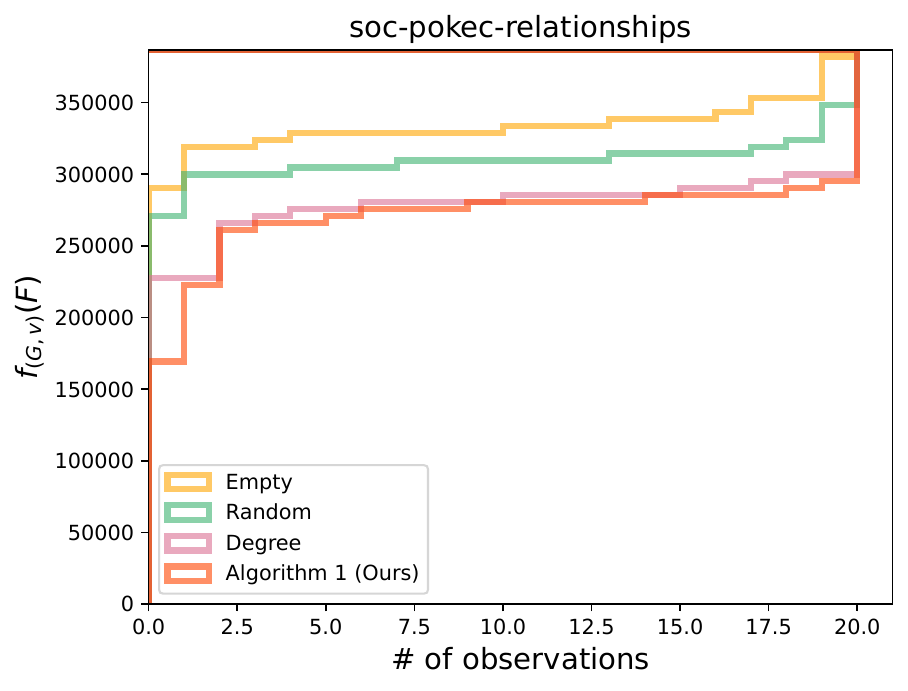}
\caption{Quality of solutions of the algorithms (except for Algorithm~\ref{alg:biapprox}) with $b=\lfloor \frac{3}{4}|\rho(v)|\rfloor$.}\label{fig:performance_large}
\end{figure*}

\subsection{Proof of Theorem~7}\label{appendix:proof_convergence}
\begin{proof}
The proof is almost straightforward from Fact~\ref{thm:psm_original} in Appendix~\ref{appendix:convergence}.
By defining $\widehat{f}_{(G,v)}(\bm{x})=\infty$ for any $\bm{x}\in \mathbb{R}^{\rho(v)} \setminus C$,
we have an extended real-valued function $\widehat{f}_{(G,v)}\colon \mathbb{R}^{\rho(v)}\rightarrow (-\infty,\infty]$.
Then \textsf{Relaxation} becomes a special case of (P) in Appendix~\ref{appendix:convergence}.
Let us confirm the validity of Assumption~\ref{assump:1} in Appendix~\ref{appendix:convergence} for \textsf{Relaxation}:
\begin{itemize}
\item[(1)] From the definition of $\widehat{f}_{(G,v)}$ (and $f_{(G,v)}$),
$\widehat{f}_{(G,v)}$ does not attain $-\infty$ and $\widehat{f}(\bm{x})<\infty$ for any $\bm{x}\in C\neq \emptyset$,
which means that $\widehat{f}_{(G,v)}$ is proper.
As $\widehat{f}_{(G,v)}$ is continuous over $C$ and the (effective) domain of $\widehat{f}_{(G,v)}$ (i.e., $C$) is closed,
by Theorem~2.8 in Beck~\cite{Beck17}, $\widehat{f}_{(G,v)}$ is closed.
As mentioned above, $\widehat{f}_{(G,v)}$ is convex by the submodularity of $f_{(G,v)}$.
\item[(2)] This is trivial.
\item[(3)] As the boundary of $C$ is not contained in the interior of the (effective) domain of $\widehat{f}_{(G,v)}$, this does not hold.
However, according to the analysis in Beck~\cite{Beck17}, this assumption is used only for guaranteeing the subdifferentiability of the objective function over the feasible region (i.e., the subdifferentiability of $\widehat{f}_{(G,v)}$ over $C$ in our case),
which is clearly valid.
\smallskip
\item[(4)] From the boundedness of $C$ and the piecewise-linearity of $\widehat{f}_{(G,v)}$,
we see that the set of optimal solutions to \textsf{Relaxation} is nonempty. 
\end{itemize}
Finally, $C$ is obviously compact.
Therefore, we have the theorem.
\end{proof}

\section{Omitted Contents from Section~\ref{sec:experiments}}

\subsection{Counterparts of Figure~\ref{fig:performance}}\label{appendix:figures}
See Figures~\ref{fig:performance_small} and~\ref{fig:performance_large}.
Notably, for \texttt{citeseer} with $b=\lfloor \frac{1}{4}|\rho(v)|\rfloor$ and \texttt{munmun-twitter-social} with $b=\lfloor \frac{3}{4}|\rho(v)|\rfloor$, Algorithm~\ref{alg:fast} performs much better than \textsf{Greedy}, demonstrating that \textsf{Greedy} does not necessarily perform better than Algorithm~\ref{alg:fast} even for real-world graphs small enough to allow for running \textsf{Greedy}.

\end{document}